\documentclass[12pt]{article}

\usepackage{amsmath}
\usepackage{amssymb}
\usepackage{amsfonts}
\usepackage{latexsym}
\usepackage{color}

\catcode `\@=11 \@addtoreset{equation}{section}

\catcode `\@=12



  \voffset1cm

\newcommand{\be}{\begin{equation}}
\newcommand{\en}{\end{equation}}
\newcommand{\bea}{\begin{eqnarray}}
\newcommand{\ena}{\end{eqnarray}}
\newcommand{\beano}{\begin{eqnarray*}}
\newcommand{\enano}{\end{eqnarray*}}
\newcommand{\bee}{\begin{enumerate}}
\newcommand{\ene}{\end{enumerate}}

\newcommand{\mc}{\mathcal}

\newcommand{\D}{{\mc D}}

\newcommand{\E}{{\cal E}}
\newcommand{\F}{{\cal F}}
\newcommand{\G}{{\cal G}}

\newcommand{\Lc}{{\cal L}}
\newcommand{\ltwo}{{\Lc^2(\mathbb{R})}}

\newcommand{\1}{1 \!\! 1}

\newcommand{\Hil}{\mc H}

\newtheorem{thm}{Theorem}

\newtheorem{prop}[thm]{Proposition}
\newtheorem{defn}[thm]{Definition}

\newenvironment{proof}{\noindent {\bf Proof --}}{\hfill$\square$ \vspace{3mm}\endtrivlist}

\catcode `\@=11 \@addtoreset{equation}{section}
\catcode `\@=12

\textwidth17cm \textheight21cm

\hoffset-1.5cm \voffset-1cm

\begin{document}

\thispagestyle{empty}

\vspace*{2cm}

\begin{center}
{\Large \bf A class of weak pseudo-bosons and their bi-coherent states}   \vspace{2cm}\\

{\large F. Bagarello}\\
  Dipartimento di Ingegneria,
Universit\`a di Palermo,\\ I-90128  Palermo, Italy\\
and I.N.F.N., Sezione di Catania\\
e-mail: fabio.bagarello@unipa.it\\
home page: www1.unipa.it/fabio.bagarello

\end{center}

\vspace*{2cm}

\begin{abstract}
\noindent In this paper we extend some previous results on weak pseudo-bosons and on their related bi-coherent states. The role of {\em compatible} functions is discussed in details, and some examples are considered. The pseudo-bosonic ladder operators analysed in this paper generalize significantly those considered so far, and a class of new diagonalizable manifestly non self-adjoint Hamiltonians are deduced.

\end{abstract}

\vspace{2cm}


\vfill


\newpage

\section{Introduction}

In quantum mechanics one of the preliminary difficulties one often meets, when dealing with a specific physical system, is to deduce the eigenvalues and the eigenvectors of its Hamiltonian. This is usually a difficult task. There exist very few systems for which this operation is simple, and only few for which it is not particularly complicated. Many more are the Hamiltonians whose eigenvectors and eigenvalues are difficult, when not impossible, to find. For this reason the quest for new {\em solvable} Hamiltonians is always open, and it has produced several interesting approaches: supersymmetric quantum mechanics, \cite{CKS,jun,gango,bagchi}, intertwining operators \cite{intop1,intop2,intop3}, factorizable Hamiltonians  and shape invariant potentials \cite{bagchi,dong}, ladder operators and generalized algebras \cite{curado2}, are just few of the approaches proposed along the years for giving partial results to this quest. And these techniques  have been adopted mainly in connection with self-adjoint Hamiltonians.

In recent years, the role of non self-adjoint Hamiltonians in physics has become more and more evident, and the interest for this kind of operators originated many lines of research, both with a physical and with a more mathematical taste. Thousands of paper have been published in the past two decades, together with some monographs and edited volumes. In particular, we refer to \cite{benbook}-\cite{bagbookPT}, where many more references can be found. 

The intersection between the above two topics has driven our interest to a specific way to solve the eigenvalue problem for a class of non self-adjoint Hamiltonians, written in terms of some sort of deformed bosonic operators. These have been called {\em pseudo-bosonic} operators, and we have shown in recent years that, under some mild assumptions, they produce two families of biorthonormal vectors which turn out to be eigenvectors of certain number-like, manifestly non self-adjoint, operators, $N$ and $N^\dagger$.  As one usually does in quantum mechanics, the analysis of these operators was originally performed in some relevant Hilbert space which is, quite often,  $\ltwo$. In this case, in most of the applications discussed in our knowledge the literature, the eigenfunctions of $N$ and $N^\dagger$ are total in $\ltwo$, but not  bases.  

Since 2020 it appeared clear that $\ltwo$ is not necessarily the most convenient space where to work with pseudo-bosons. In fact, \cite{bag2020JPA}, sometimes distributions are needed in this analysis. This opens the  possibility that  there exists some {\em intermediate} space between $\ltwo$ and $\D'(\mathbb{R})$, the set of distributions, which is relevant in the analysis of some deformed version of the canonical commutation relations. Indeed, we have already shown in \cite{bagJPCS2021} that, sometimes, more than square-integrable functions, it is convenient to work with pairs of {\em compatible} functions, i.e. with functions which are not both square integrable, but whose product still belongs to $\Lc^1(\mathbb{R})$. This is interesting, since allows us to introduce a concept of biorthogonality between functions, even outside  $\Lc^2(\mathbb{R})$. Moreover, this is in line with what has been done, with a more mathematical taste, in \cite{pip}, in connection with the so-called PIP spaces, where PIP stands for partial inner product. The importance of compatible functions will appear clear in the rest of this paper, where the focus is on pseudo-bosonic ladder operators which generalize many of those introduced in the past years. In doing so, we will find some interesting  mathematics, showing that a distributional point of view can be useful, in particular in connection with our version of coherent states.

 The paper is organized as follows: in the next section we will list few results and definitions on {\em ordinary} pseudo-bosons and bi-coherent states.  In Section \ref{sectgenclasspbs}, extending some older results, we consider a large class of pseudo-bosonic operators $a$ and $b$, and we construct two  families of functions, not necessarily in $\ltwo$, which are built by using $a$ and $b$ as ladder operators. Examples are described in Sections \ref{sectAGE} and \ref{sectexamples}, where we also deduce the Hamiltonians having these families of functions as eigenstates. Section \ref{sectBCS} contains our results on the weak version of bi-coherent states, while the conclusions are given in Section \ref{sectconcl}

\section{Preliminaries}

 To keep the paper self-contained, we devote this section to list few useful definitions and results on pseudo-bosons and on bi-coherent states  in Hilbert spaces. We refer to the recent book \cite{bagspringer} for many more details. We will show what happens {\em outside} Hilbert spaces in the second part of this paper.

\subsection{$\D$-pseudo bosons: basic facts}\label{sectpbs}

Let $\Hil$ be a given Hilbert space with scalar product $\left<.,.\right>$ and related norm $\|.\|$. Let $a$ and $b$ be two operators
on $\Hil$, with domains $D(a)\subset \Hil$ and $D(b)\subset \Hil$ respectively, $a^\dagger$ and $b^\dagger$ their adjoint, and let $\D$ be a dense subspace of $\Hil$
such that $a^\sharp\D\subseteq\D$ and $b^\sharp\D\subseteq\D$. Here with $x^\sharp$ we indicate $x$ or $x^\dagger$. Of course, $\D\subseteq D(a^\sharp)$
and $\D\subseteq D(b^\sharp)$.

\begin{defn}\label{def21}
	The operators $(a,b)$ are $\D$-pseudo bosonic  if, for all $f\in\D$, we have
	\be
	a\,b\,f-b\,a\,f=f.
	\label{A1}\en
\end{defn}

When $b=a^\dagger$, this is simply the canonical commutation relation (CCR) for ordinary bosons. However, when the CCR is replaced by (\ref{A1}), the situation changes. In particular, it is useful to assume the following:

\vspace{2mm}

{\bf Assumption $\D$-pb 1.--}  there exists a non-zero $\varphi_{ 0}\in\D$ such that $a\,\varphi_{ 0}=0$.

\vspace{1mm}

{\bf Assumption $\D$-pb 2.--}  there exists a non-zero $\Psi_{ 0}\in\D$ such that $b^\dagger\,\Psi_{ 0}=0$.

\vspace{2mm}
We have seen in \cite{bag2020JPA} that these assumptions are not necessarily true, even for operators satisfying (\ref{A1}). 

It is obvious that, since $\D$ is stable under the action of $b$ and $a^\dagger$, then  $\varphi_0\in D^\infty(b):=\cap_{k\geq0}D(b^k)$ and  $\Psi_0\in D^\infty(a^\dagger)$, so
that the vectors \be \varphi_n:=\frac{1}{\sqrt{n!}}\,b^n\varphi_0,\qquad \Psi_n:=\frac{1}{\sqrt{n!}}\,{a^\dagger}^n\Psi_0, \label{A2}\en
$n\geq0$, can be defined and they all belong to $\D$. Hence, they also belong to the domains of $a^\sharp$, $b^\sharp$ and $N^\sharp$, where $N=ba$.  Moreover, it is  simple to deduce the following lowering and raising relations:
\be
\left\{
\begin{array}{ll}
	b\,\varphi_n=\sqrt{n+1}\varphi_{n+1}, \qquad\qquad\quad\,\, n\geq 0,\\
	a\,\varphi_0=0,\quad a\varphi_n=\sqrt{n}\,\varphi_{n-1}, \qquad\,\, n\geq 1,\\
	a^\dagger\Psi_n=\sqrt{n+1}\Psi_{n+1}, \qquad\qquad\quad\, n\geq 0,\\
	b^\dagger\Psi_0=0,\quad b^\dagger\Psi_n=\sqrt{n}\,\Psi_{n-1}, \qquad n\geq 1,\\
\end{array}
\right.
\label{A3}\en as well as the eigenvalue equations $N\varphi_n=n\varphi_n$ and  $N^\dagger\Psi_n=n\Psi_n$, $n\geq0$. If  $\left<\varphi_0,\Psi_0\right>=1$, then
\be \left<\varphi_n,\Psi_m\right>=\delta_{n,m}, \label{A4}\en
for all $n, m\geq0$. Hence $\F_\Psi=\{\Psi_{ n}, \,n\geq0\}$ and
$\F_\varphi=\{\varphi_{ n}, \,n\geq0\}$ are biorthonormal.

The analogy with ordinary bosons suggests us to consider the following:

\vspace{2mm}

{\bf Assumption $\D$-pb 3.--}  $\F_\varphi$ is a basis for $\Hil$.

\vspace{1mm}

This is equivalent to requiring that $\F_\Psi$ is a basis for $\Hil$ as well. However, several  physical models show that $\F_\varphi$ is {\bf not} always a basis for $\Hil$, but it is still total in $\Hil$: if $f\in\Hil$ is orthogonal to $\varphi_n$, for all $n$, then $f=0$. For this reason we have adopted the following weaker version of  Assumption $\D$-pb 3, \cite{baginbagbook}:

\vspace{2mm}

{\bf Assumption $\D$-pbw 3.--}  For some subspace $\G$ dense in $\Hil$, $\F_\varphi$ and $\F_\Psi$ are $\G$-quasi bases.

\vspace{2mm}
This means that, for all $f$ and $g$ in $\G$,
\be
\left<f,g\right>=\sum_{n\geq0}\left<f,\varphi_n\right>\left<\Psi_n,g\right>=\sum_{n\geq0}\left<f,\Psi_n\right>\left<\varphi_n,g\right>,
\label{A4b}
\en
which can be seen as a weak form of the resolution of the identity, restricted to $\G$.

The families $\F_\varphi$ and $\F_\Psi$ can be used to define two densely defined operators $S_\varphi$ and $S_\Psi$ via their
action respectively on  $\F_\Psi$ and $\F_\varphi$: \be
S_\varphi\Psi_{ n}=\varphi_{ n},\qquad
S_\Psi\varphi_{ n}=\Psi_{\bf n}, \label{213}\en for all $ n$. These operators play a very import role in the analysis of pseudo-bosons, since they map $\F_\varphi$ into $\F_\Psi$ and vice-versa, and define new scalar products in $\Hil$ is terms of which, for instance, the (new) adjoint of $b$ turns out to coincide with $a$. These, and many other aspects which are not relevant here, can be found in \cite{bag2020JPA,baginbagbook}. 

\subsection{Bi-coherent states}\label{sectbcs}

Let us consider two biorthogonal families of vectors, $\F_{\tilde\varphi}=\{\tilde\varphi_n\in\Hil, \, n\geq0\}$ and $\F_{\tilde\Psi}=\{\tilde\Psi_n\in\Hil, \, n\geq0\}$ which are $\G$
-quasi bases for some dense subset of $\Hil$, $\G$, see (\ref{A4b}). Consider an increasing sequence of real numbers $\alpha_n$ satisfying the  inequalities $0=\alpha_0<\alpha_1<\alpha_2<\ldots$. We call $\overline\alpha$ the limit of $\alpha_n$ for $n$ diverging, which coincides with $\sup_n\alpha_n$. We further consider two operators, $A$ and $B^\dagger$, which act as lowering operators respectively on $\F_{\tilde\varphi}$ and $\F_{\tilde\Psi}$ in the following way:
\be
A\,\tilde\varphi_n=\alpha_n\tilde\varphi_{n-1}, \qquad B^\dagger\,\tilde\Psi_n=\alpha_n\tilde\Psi_{n-1},
\label{20}\en
for all $n\geq1$, with $A\,\tilde\varphi_0=B^\dagger\,\tilde\Psi_0=0$. These are the lowering equations which replace those in (\ref{A3}), which can be recovered if $\alpha_n=\sqrt{n}$ and if $A$ and $B$ obey (\ref{A1}). Then the following theorem holds, \cite{bagproc}:

\begin{thm}\label{theo1}
	Assume that four strictly positive constants $A_\varphi$, $A_\Psi$, $r_\varphi$ and $r_\Psi$ exist, together with two strictly positive sequences $M_n(\varphi)$ and $M_n(\Psi)$, for which
	\be
	\lim_{n\rightarrow\infty}\frac{M_n(\varphi)}{M_{n+1}(\varphi)}=M(\varphi), \qquad \lim_{n\rightarrow\infty}\frac{M_n(\Psi)}{M_{n+1}(\Psi)}=M(\Psi),
	\label{21}\en
	where $M(\varphi)$ and $M(\Psi)$ could be infinity, and such that, for all $n\geq0$,
	\be
	\|\tilde\varphi_n\|\leq A_\varphi\,r_\varphi^n M_n(\varphi), \qquad \|\tilde\Psi_n\|\leq A_\Psi\,r_\Psi^n M_n(\Psi).
	\label{22}\en
	Then, putting $\alpha_0!=1$ and $\alpha_k!=\alpha_1\alpha_2\cdots\alpha_k$, $k\geq1$, the following series:
	\be
	N(|z|)=\left(\sum_{k=0}^\infty\frac{|z|^{2k}}{(\alpha_k!)^2}\right)^{-1/2},
	\label{23}\en
	\be
	\varphi(z)=N(|z|)\sum_{k=0}^\infty\frac{z^k}{\alpha_k!}\tilde\varphi_k,\qquad \Psi(z)=N(|z|)\sum_{k=0}^\infty\frac{z^k}{\alpha_k!}\tilde\Psi_k,
	\label{24}\en
	are all convergent inside the circle $C_\rho(0)$ in $\mathbb{C}$ centered in the origin of the complex plane and of radius $\rho=\overline\alpha\,\min\left(1,\frac{M(\varphi)}{r_\varphi},\frac{M(\Psi)}{r_\Psi}\right)$. Moreover, for all $z\in C_\rho(0)$,
	\be
	A\varphi(z)=z\varphi(z), \qquad B^\dagger \Psi(z)=z\Psi(z).
	\label{25}\en
	Suppose further that a measure $d\lambda(r)$ does exist such that
	\be
	\int_0^\rho d\lambda(r) \,r^{2k}=\frac{(\alpha_k!)^2}{2\pi},
	\label{26}\en
	for all $k\geq0$. Then, putting $z=re^{i\theta}$ and calling $d\nu(z,\overline z)=N(r)^{-2}d\lambda(r)d\theta$, we have
	\be
	\int_{C_\rho(0)}\left<f,\Psi(z)\right>\left<\varphi(z),g\right>d\nu(z,\overline z)=
	\int_{C_\rho(0)}\left<f,\varphi(z)\right>\left<\Psi(z),g\right>d\nu(z,\overline z)=
	\left<f,g\right>,
	\label{27}\en
	for all $f,g\in\G$.
	
\end{thm}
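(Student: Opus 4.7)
The plan is to establish the three claims---convergence of the series, the eigenvalue equations (\ref{25}), and the resolution of identity (\ref{27})---in sequence. For convergence I would apply the ratio test to the norm-majorants. The $k$-th term of $\varphi(z)$ is bounded in norm by $\frac{|z|^k}{\alpha_k!}\,A_\varphi r_\varphi^k M_k(\varphi)$ in view of (\ref{22}); using (\ref{21}) together with $\alpha_k\to\overline\alpha$, the ratio of consecutive majorants tends to $|z|\,r_\varphi/(\overline\alpha\,M(\varphi))$, which is less than $1$ precisely when $|z|<\overline\alpha\,M(\varphi)/r_\varphi$. The same computation applied to $\Psi(z)$ and to the series inside $N(|z|)^{-2}$ yields the other two radii, and their minimum is exactly $\rho$.

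For (\ref{25}) I would apply $A$ termwise to the partial sums $S_N(z)=N(|z|)\sum_{k=0}^{N}\frac{z^k}{\alpha_k!}\tilde\varphi_k$, using (\ref{20}) and reindexing $j=k-1$ to obtain $A\,S_N(z)=N(|z|)\,z\sum_{j=0}^{N-1}\frac{z^j}{\alpha_j!}\tilde\varphi_j$. Since $S_N(z)\to\varphi(z)$ and $A\,S_N(z)\to z\,\varphi(z)$ in norm by the first part, closability of $A$---standard in the pseudo-bosonic framework---yields $A\,\varphi(z)=z\,\varphi(z)$. The identity $B^\dagger\Psi(z)=z\Psi(z)$ is proved identically with $\tilde\Psi$ and $B^\dagger$ in place of $\tilde\varphi$ and $A$.

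For (\ref{27}) I would substitute the series into both inner products. With the convention that $\left<\cdot,\cdot\right>$ is conjugate-linear in the first slot and passing to polar coordinates $z=re^{i\theta}$,
\[
\left<f,\Psi(z)\right>\left<\varphi(z),g\right>=N(r)^2\sum_{n,k\geq 0}\frac{r^{n+k}e^{i(n-k)\theta}}{\alpha_n!\,\alpha_k!}\left<f,\tilde\Psi_n\right>\left<\tilde\varphi_k,g\right>.
\]
The prefactor $N(r)^2$ cancels the $N(r)^{-2}$ in $d\nu(z,\overline z)$; the $\theta$-integration yields $2\pi\,\delta_{n,k}$, collapsing the double sum to a single sum; and the $r$-integration uses (\ref{26}) to replace $\int_0^\rho r^{2n}\,d\lambda(r)$ by $(\alpha_n!)^2/(2\pi)$. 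The expression thus reduces to $\sum_n\left<f,\tilde\Psi_n\right>\left<\tilde\varphi_n,g\right>$, which equals $\left<f,g\right>$ for $f,g\in\G$ by the quasi-basis identity (\ref{A4b}). The second equality in (\ref{27}) follows by swapping the roles of $\tilde\varphi$ and $\tilde\Psi$ throughout.

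The main technical obstacle is justifying the termwise interchange of summation and integration. I would dominate the absolute value of the double series by $\|f\|\,\|g\|\,A_\varphi A_\Psi\sum_{n,k}\frac{r^{n+k}\,r_\Psi^n\,r_\varphi^k\,M_n(\Psi)\,M_k(\varphi)}{\alpha_n!\,\alpha_k!}$ via Cauchy--Schwarz and (\ref{22}), and observe that on any closed disc strictly contained in $C_\rho(0)$ this majorant factorises as the product of the two convergent series estimated above, hence is integrable against $d\lambda(r)\,d\theta$ there, so that Fubini applies. Alternatively, one may carry out the calculation on truncated partial sums---for which the angular orthogonality and (\ref{26}) apply exactly---and then pass to the limit via dominated convergence.
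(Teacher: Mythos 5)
The paper does not actually prove Theorem \ref{theo1}: it is quoted from \cite{bagproc}, and your argument is precisely the standard one used there --- ratio test on the norm-majorants (giving the three radii whose minimum is $\rho$), termwise application of the lowering relations (\ref{20}) with the reindexing $\alpha_k/\alpha_k!=1/\alpha_{k-1}!$, and the angular-orthogonality plus moment-condition computation that reduces (\ref{27}) to the quasi-basis identity (\ref{A4b}). Your write-up is correct and, if anything, more careful than the source in flagging the closability of $A$ and the Fubini interchange; the only remaining loose end is that after integrating over closed sub-discs exhausting $C_\rho(0)$ one must still pass the limit through the sum over $n$, and (\ref{A4b}) guarantees only conditional, not absolute, convergence of $\sum_n\left<f,\tilde\Psi_n\right>\left<\tilde\varphi_n,g\right>$ --- a gap shared with the original proof.
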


Some comments are in order: first we observe that, contrarily to what happens for ordinary coherent states, \cite{aagbook,didier,gazeaubook},  the norms of the vectors $\tilde\varphi_n$ and $\tilde\Psi_n$ need not being uniformly bounded, here. On the contrary, they can diverge rather fast with $n$, see (\ref{22}). Of course, this is reflected by the fact that bi-coherent states of this kind only exist inside $C_\rho(0)$. We also observe that no mention is made here to the {\em displacement-like} operators usually relevant in connection with ordinary coherent states\footnote{We are referring here to the unitary operator $U(z)=e^{\overline{z}\, c-z c^\dagger}$, where $[c,c^\dagger]=\1$, which should be replaced here, for instance, by $e^{\overline{z}\, a-z b}$.}. This is, indeed, a non trivial aspect of the theory of bi-coherent states, discussed at length in \cite{bagspringer}. Another relevant comment here is that Theorem \ref{theo1} is given in an Hilbert space. Indeed, $\tilde\varphi_n$ and $\tilde\Psi_n$ have  finite norms, as well as the vectors $\varphi(z)$ and $\Psi(z)$. However, in some particular models, $\|\tilde\varphi_n\|=\|\tilde\Psi_n\|=\infty$, for all (or some) $n$. Hence, working in $\Hil$ is not the most appropriate choice, of course. We have discussed this situation in \cite{bagJPCS2021} and in \cite{bagspringer}, and it is at the basis of what we will discuss in Section \ref{sectBCS}. Last but not least, if $A$ and $B$ are pseudo-bosonic, then $\alpha_n=\sqrt{n}$ and, therefore, $\overline\alpha=\infty$ and $C_\rho(0)$ coincides with the whole complex plane, at least if $M(\varphi)$ and $M(\Psi)$ are both non zero.

\section{A general class of pseudo-bosonic operators}\label{sectgenclasspbs}

The main aim of this paper is to consider first order differential operators of the form
\be
a=\alpha_a(x)\,\frac{d}{dx}+\beta_a(x), \qquad b=-\frac{d}{dx}\,\alpha_b(x)+\beta_b(x), 
\label{41}\en
for some suitable functions $\alpha_j(x)$ and $\beta_j(x)$, $j=a,b$. In what follows we will only need to compute the first and the second derivatives of these functions. However, in all the examples considered in the rest of the paper these are $C^\infty$ functions, as in \cite{bagJPCS2021}. This is what happens also in all the existing literature, in our knowledge. In particular, for ordinary bosons ($a=c=\frac{1}{\sqrt{2}}(\frac{d}{dx}+x)$ and $b=c^\dagger=\frac{1}{\sqrt{2}}(-\frac{d}{dx}+x)$), we have $\alpha_a(x)=\alpha_b(x)=\frac{1}{\sqrt{2}}$, while   $\beta_a(x)=\beta_b(x)=\frac{1}{\sqrt{2}}\,x$. For the shifted harmonic oscillator, see \cite{baginbagbook} and references therein, we have $a=c+\alpha\1$ and $b=c^\dagger+\beta\1$, for some complex $\alpha$ and $\beta$ with $\alpha\neq \overline{\beta}$, and therefore $\alpha_a(x)=\alpha_b(x)=\frac{1}{\sqrt{2}}$ as before, while   $\beta_a(x)=\frac{1}{\sqrt{2}}\,x+\alpha$ and $\beta_b(x)=\frac{1}{\sqrt{2}}\,x+\beta$. Another interesting quantum mechanical system which have been considered in this context is the Swanson model, see again \cite{baginbagbook} and references therein, where
$$
a=\frac{1}{\sqrt{2}}\left(e^{-i\theta}\,\frac{d}{dx}+e^{i\theta}x\right), \qquad b=\frac{1}{\sqrt{2}}\left(-e^{-i\theta}\,\frac{d}{dx}+e^{i\theta}x\right).
$$
In this case, $\alpha_a(x)=\alpha_b(x)=\frac{e^{-i\theta}}{\sqrt{2}}$, while $\beta_a(x)=\beta_b(x)=\frac{e^{i\theta}x}{\sqrt{2}}$.

More recently, \cite{bagJPCS2021,bagspringer}, a rather general class of pseudo-bosonic operators have been considered, where $A=\frac{d}{dx}+w_A(x)$ and $B=-\frac{d}{dx}+w_B(x)$. In this case $\alpha_a(x)=\alpha_b(x)=1$, while $w_A(x)$ and $w_B(x)$ have been called {\em pseudo-bosonic superpotentials} (PBSs) and they must satisfy $(w_A(x)+w_B(x))'=1$, where the prime stands for the first $x$-derivative. In particular, in this last example, different choices of $C^\infty$ functions  $w_A(x)$ and $w_B(x)$ give rise to different families of functions, $\varphi_{n}(x)$ and $\Psi_{n}(x)$, constructed in analogy with (\ref{A2}), which may, or may not, be square-integrable. However, see \cite{bagJPCS2021}, we have proven the following result:

\begin{prop}\label{prop1}
	If $w_A(x)$ and $w_B(x)$ are $C^\infty$ PBSs, then $\varphi_{n}(x)\, \overline{\Psi_m(x)}\in\Lc^1(\mathbb{R})$ and  $\langle\Psi_m,\varphi_n\rangle=\delta_{n,m}$, for all $n,m\geq0$.
\end{prop}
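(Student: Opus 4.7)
The plan is to represent both families as polynomial multiples of the vacua and then read off both claims from the explicit formulas. First, solving $a\varphi_0=\varphi_0'+w_A\varphi_0=0$ and $b^\dagger\Psi_0=\Psi_0'+\overline{w_B}\,\Psi_0=0$ yields
\[
\varphi_0(x)=N_\varphi\,e^{-\int_0^x w_A(t)\,dt},\qquad \Psi_0(x)=N_\Psi\,e^{-\int_0^x \overline{w_B(t)}\,dt}.
\]
The crucial input of the PBS hypothesis is that $(w_A+w_B)'=1$ forces $w_A(x)+w_B(x)=x+c$ for some (possibly complex) constant $c$, so that
\[
\varphi_0(x)\,\overline{\Psi_0(x)}=N_\varphi\,\overline{N_\Psi}\;e^{-x^2/2-cx}
\]
is a complex Gaussian and in particular lies in $\Lc^1(\mathbb{R})$. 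Choosing $N_\varphi\overline{N_\Psi}$ so that this integrates to one establishes the base case $\langle\Psi_0,\varphi_0\rangle=1$.

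Second, I would prove by induction that $\varphi_n(x)=p_n(x)\varphi_0(x)$ and $\Psi_n(x)=q_n(x)\Psi_0(x)$ with $p_n,q_n$ polynomials of degree exactly $n$. Using $\varphi_0'=-w_A\varphi_0$ and $\Psi_0'=-\overline{w_B}\,\Psi_0$, a direct computation gives
\[
b(p\,\varphi_0)=\bigl(-p'+(w_A+w_B)p\bigr)\varphi_0=\bigl(-p'+(x+c)p\bigr)\varphi_0,
\]
\[
a^\dagger(q\,\Psi_0)=\bigl(-q'+\overline{(w_A+w_B)}\,q\bigr)\Psi_0=\bigl(-q'+(x+\overline c)q\bigr)\Psi_0.
\]
Combined with the raising relations (A3) these yield the recurrences $p_{n+1}=(n+1)^{-1/2}\bigl(-p_n'+(x+c)p_n\bigr)$ and an analogous one for $q_{n+1}$, both anchored by $p_0=q_0=1$; the degrees clearly climb by one at every step. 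Consequently
\[
\varphi_n(x)\,\overline{\Psi_m(x)}=p_n(x)\,\overline{q_m(x)}\,\varphi_0(x)\,\overline{\Psi_0(x)}
\]
is a polynomial of degree $n+m$ times the same Gaussian, hence belongs to $\Lc^1(\mathbb{R})$ for every $n,m\geq 0$.

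Third, biorthogonality follows from the eigenvalue equations $N\varphi_n=n\varphi_n$ and $N^\dagger\Psi_m=m\Psi_m$. They give
\[
(n-m)\langle\Psi_m,\varphi_n\rangle=\langle\Psi_m,N\varphi_n\rangle-\langle N^\dagger\Psi_m,\varphi_n\rangle,
\]
and the right-hand side vanishes by integration by parts; this forces $\langle\Psi_m,\varphi_n\rangle=0$ whenever $n\neq m$. The diagonal case is then a short induction, $\langle\Psi_n,\varphi_n\rangle=\tfrac{1}{\sqrt n}\langle\Psi_n,b\,\varphi_{n-1}\rangle=\tfrac{1}{\sqrt n}\langle b^\dagger\Psi_n,\varphi_{n-1}\rangle=\langle\Psi_{n-1},\varphi_{n-1}\rangle$, anchored by the normalization fixed in Step 1.

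The main (and essentially only) obstacle in this scheme is rigorously justifying the integration by parts just invoked: the individual factors $\varphi_j$ and $\Psi_k$ may grow rapidly at infinity, which is precisely the reason the paper abandons $\Lc^2(\mathbb{R})$ in favour of compatible pairs. The saving grace is that every intermediate integrand produced by an IBP is again a polynomial times the Gaussian $\varphi_0\overline{\Psi_0}$, so all boundary terms at $\pm\infty$ vanish automatically; the $C^\infty$ regularity of the superpotentials is what keeps us inside the polynomial-times-Gaussian class at every differentiation. Once this observation is in hand, the rest of the argument is pure bookkeeping.
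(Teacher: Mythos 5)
Your argument is correct, and the first half of it -- writing $\varphi_n=p_n\varphi_0$, $\Psi_m=q_m\Psi_0$ with $\deg p_n=n$, $\deg q_m=m$, and observing that $(w_A+w_B)'=1$ forces $\varphi_0\overline{\Psi_0}$ to be a (complex) Gaussian, so that $\varphi_n\overline{\Psi_m}$ is a polynomial of degree $n+m$ times a Gaussian and hence lies in $\Lc^1(\mathbb{R})$ -- is exactly the route the paper takes (see the discussion around (\ref{410})--(\ref{411}) and the displayed exponential in Section \ref{subsectionconstantalphas}). Where you genuinely diverge is in the biorthonormality. The paper pushes the recursion for $p_n$ all the way to a closed form, identifying $\pi_n$ (and $\sigma_n$) as rescaled Hermite polynomials as in (\ref{411}), and then obtains $\langle\Psi_m,\varphi_n\rangle=\delta_{n,m}$ by a change of variables reducing the compatibility form to the classical orthogonality relation $\int H_mH_ne^{-s^2}\,ds=\sqrt{\pi}\,2^nn!\,\delta_{n,m}$. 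You instead never need the closed form: you use the eigenvalue equations $N\varphi_n=n\varphi_n$, $N^\dagger\Psi_m=m\Psi_m$ together with integration by parts to kill the off-diagonal terms, and a one-step ladder identity for the diagonal. Your version is more structural and would survive deformations where the polynomials are not recognizably Hermite; its only delicate point is the justification of the integrations by parts, and you correctly identify the saving fact, namely that every integrand and every boundary term produced along the way is again a polynomial times the same Gaussian $\varphi_0\overline{\Psi_0}$, so all boundary contributions at $\pm\infty$ vanish. (One small wording caveat: it is not the $C^\infty$ regularity of $w_A$, $w_B$ per se that keeps you in the polynomial-times-Gaussian class, but the cancellation $a(p\varphi_0)=p'\varphi_0$ and $b(p\varphi_0)=(-p'+(x+c)p)\varphi_0$, in which only the sum $w_A+w_B=x+c$ survives; the smoothness is needed only to make $\varphi_0$ and $\Psi_0$ well defined.) The paper's explicit route buys the concrete Hermite formulas that are reused later for the quasi-basis and bi-coherent-state analysis, which is why it is preferred there.
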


In this case the functions $\varphi_{n}(x)$ and $\Psi_{n}(x)$ are called {\em compatible}, in the sense of the so-called PIP-spaces, \cite{pip}. In this perspective it is useful to recall that two functions $h_1(x)\in\Lc^p(\mathbb{R})$ and $h_2(x)\in\Lc^q(\mathbb{R})$ can be multiplied producing a third function $h(x)=h_1(x)h_2(x)$ which is integrable, $h(x)\in\Lc^1(\mathbb{R})$, if $\frac{1}{p}+\frac{1}{q}=1$. Hence, a sort of scalar product can be defined also for these pairs of function. But, rather than using the term {\em scalar product}, we prefer to adopt a different terminology, and call this a {\em compatibility form}. It is clear that, for those functions which are compatible, a generalized notion of biorthonormality can be introduced.

\vspace{1mm}
   
In what follows, we are interested in extending the results in \cite{bagJPCS2021} to the operators in (\ref{41}). In particular, we want to discuss the following aspects:

\begin{enumerate}
	\item do these operators obey pseudo-bosonic commutation rules? 
	\item do they produce biorthonormal families of vectors?
	\item do these vectors belong to $\ltwo$? Or, in case they do not, are these families compatible?
	\item are these vectors (generalized) eigenvectors of some particular operator? 
	\item are $a$ and $b$ connected to some families of bi-coherent states? 
	\item do these bi-coherent states produce some sort of resolution of the identity?
\end{enumerate}

To  answer these questions we first compute $[a,b]$ on some sufficiently regular function $f(x)$. It is important to stress that, as already mentioned, in the analysis proposed in this paper the role of $\ltwo$ is not essential, since we are more interested in compatible pairs of functions, rather than in square integrable ones. For this reason in what follows we will not impose $f(x)$ to belong to some suitable subspace of $\ltwo$, but only to be regular enough to admit all the computations we need to perform on it. More explicitly , we will assume $f(x)$ to be at least $C^2$. Of course, this requirement could be relaxed if we interpret $\frac{d}{dx}$ as the weak derivative, but this will not be done here.
An easy computation shows that, under this mild condition on $f(x)$, $[a,b]f(x)$ does make sense, and $[a,b]f(x)=f(x)$ if $\alpha_j(x)$ and $\beta_j(x)$, $j=a,b$, satisfy the following equalities
\be
\left\{
\begin{array}{ll}
	\alpha_a(x)\alpha_b'(x)=\alpha_a'(x)\alpha_b(x), \\
	\alpha_a(x)\beta_b'(x)+	\alpha_b(x)\beta_a'(x)=1+\alpha_a(x)\alpha_b''(x).\\
\end{array}
\right.
\label{42}\en
It is easy to check that all the examples listed at the beginning of this section satisfy indeed these two conditions, in agreement with their nature of pseudo-bosonic operators. In particular the first equation in (\ref{42}) is always true for all constant choice of $\alpha_a(x)$ and $\alpha_b(x)$. Moreover, in this case, the second equation in (\ref{42}) can be rewritten as $(\alpha_a\beta_b(x)+\alpha_b\beta_a(x))'=1$, which means that $\alpha_a\beta_b(x)+\alpha_b\beta_a(x)=x+k$, for some constant $k$. This is essentially the situation described in terms of the PBSs $w_A(x)$ and $w_B(x)$ in \cite{bagJPCS2021,bagspringer}. Incidentally it is also clear that, if $\alpha_a(x)=\alpha_a\neq0$, constant, then (\ref{42}) implies that  $\alpha_a(x)\alpha_b'(x)=\alpha_a\alpha_b'(x)=0$, which means that $\alpha_b(x)$ must also be constant. For this reason, to avoid going back to PBSs, in the rest of this paper we will  mainly focus our interest on the situation in which both $\alpha_a(x)$ and $\alpha_b(x)$ depend on $x$ in a non trivial way. Moreover, it is convenient for what follows to assume that they are never zero: $\alpha_j(x)\neq0$, $\forall x\in\mathbb{R}$, $j=a,b$. 

Under this assumption it is easy to deduce the vacua of $a$ and of $b^\dagger$, as in Section \ref{sectpbs}. In what follows the adjoint of $a$ and $b$ are operators which can be formally deduced by the standard formula $\langle X^\dagger f,g\rangle=\langle f,Xg\rangle$, for suitable $f$ and $g$. However here, recalling that we are not really interested in the role of $\ltwo$, this formula may appear {\em strange}, since the meaning of the scalar product must still be understood. For this reason, we simply call $a^\dagger$ and $b^\dagger$ the following operators:
\be
a^\dagger=-\frac{d}{dx}\,\overline{\alpha_a(x)}+\overline{\beta_a(x)}, \qquad b^\dagger=\overline{\alpha_b(x)}\,\frac{d}{dx}+\overline{\beta_b(x)},
\label{43}\en
since these are indeed the formal adjoints\footnote{These {\em formal} adjoints could be made {\em rigorous} with a proper choice of the domains of the operators involved. But, as already mentioned, the role of $\ltwo$ is not so crucial in our settings. For this reason, we consider the operators in (\ref{43}) as part of our building blocks.} of $a$ and $b$.

The vacua of $a$ and $b^\dagger$ are the solutions of $a\varphi_0(x)=0$ and $b^\dagger\psi_0(x)=0$, which are easily found:
\be
\varphi_0(x)=N_\varphi \exp\left\{-\int\frac{\beta_a(x)}{\alpha_a(x)}\,dx\right\}, \qquad \psi_0(x)=N_\psi \exp\left\{-\int\frac{\overline{\beta_b(x)}}{\overline{\alpha_b(x)}}\,dx\right\},
\label{44}\en
and are well defined under our assumptions on $\alpha_j(x)$ and $\beta_j(x)$. Here $N_\varphi$ and $N_\psi$ are normalization constants which will be fixed later. If we now introduce $\varphi_n(x)$ and $\psi_n(x)$ as in (\ref{A2}),
\be
\varphi_n(x)=\frac{1}{\sqrt{n!}}\,b^n\varphi_0(x),\qquad \psi_n(x)=\frac{1}{\sqrt{n!}}\,{a^\dagger}^n\psi_0(x), \label{45}\en
$n\geq0$, we can prove the following:

\begin{prop}\label{propstati}
	Calling $\theta(x)=\alpha_a(x)\beta_b(x)+\alpha_b(x)\beta_a(x)$ we have
	\be
	\varphi_n(x)=\frac{1}{\sqrt{n!}}\,\pi_n(x)\varphi_0(x), \qquad \psi_n(x)=\frac{1}{\sqrt{n!}}\,\sigma_n(x)\varphi_0(x),
	\label{46}\en
	$n\geq0$, where $\pi_n(x)$ and $\sigma_n(x)$ are defined recursively as follows:
	\be
	\pi_0(x)=\sigma_0(x)=1,
	\label{47}\en
	and
	\be
	\pi_n(x)=\left(\frac{\theta(x)}{\alpha_a(x)}-\alpha_b'(x)\right)\pi_{n-1}(x)-\alpha_b(x)\pi_{n-1}'(x),
	\label{48}\en
	\be
	\sigma_n(x)=\overline{\left(\frac{\theta(x)}{\alpha_b(x)}-\alpha_a'(x)\right)}\,\sigma_{n-1}(x)-\overline{\alpha_a(x)}\,\sigma_{n-1}'(x),
	\label{49}\en
	$n\geq1$.
\end{prop}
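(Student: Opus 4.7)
The plan is to prove both formulas in (\ref{46}) by induction on $n$. The base cases follow immediately from (\ref{47}) together with the $n=0$ instances of (\ref{45}). For the inductive step I will rely on the fact that, by the very definition $\varphi_n=\frac{1}{\sqrt{n!}}\,b^n\varphi_0$ in (\ref{45}), one has $\varphi_{n+1}=\frac{1}{\sqrt{n+1}}\,b\,\varphi_n$ (and similarly $\psi_{n+1}=\frac{1}{\sqrt{n+1}}\,a^\dg\psi_n$); no use of the commutation rule (\ref{A1}) is needed for this step, only of the explicit first-order form of $a$ and $b$ in (\ref{41}) and of their formal adjoints in (\ref{43}).

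For $\varphi_n$, substitute the inductive hypothesis $\varphi_n=\frac{1}{\sqrt{n!}}\,\pi_n\,\varphi_0$ and apply $b=-\frac{d}{dx}\alpha_b(x)+\beta_b(x)$ with the Leibniz rule:
$$
b[\pi_n\varphi_0]=\bigl(\beta_b-\alpha_b'\bigr)\pi_n\varphi_0-\alpha_b\,\pi_n'\,\varphi_0-\alpha_b\,\pi_n\,\varphi_0'.
$$
The decisive input is now (\ref{44}), which is equivalent to the differential identity $\varphi_0'(x)=-\frac{\beta_a(x)}{\alpha_a(x)}\varphi_0(x)$. Inserting this into the last term and collecting coefficients of $\pi_n\varphi_0$ gives
$$
\beta_b-\alpha_b'+\frac{\alpha_b\beta_a}{\alpha_a}=\frac{\alpha_a\beta_b+\alpha_b\beta_a}{\alpha_a}-\alpha_b'=\frac{\theta(x)}{\alpha_a(x)}-\alpha_b'(x),
$$
which is exactly the multiplier appearing in (\ref{48}). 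Hence $b[\pi_n\varphi_0]=\pi_{n+1}\,\varphi_0$, and dividing by $\sqrt{(n+1)!}$ closes the induction for $\varphi_{n+1}$.

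The argument for $\psi_n$ is structurally identical. Using $a^\dg=-\frac{d}{dx}\overline{\alpha_a(x)}+\overline{\beta_a(x)}$ from (\ref{43}), expanding $a^\dg[\sigma_n\psi_0]$ by Leibniz, and substituting $\psi_0'(x)=-\frac{\overline{\beta_b(x)}}{\overline{\alpha_b(x)}}\psi_0(x)$ (again from (\ref{44})) produces, after the same collection of terms with $a\leftrightarrow b$ and complex conjugates installed, the coefficient $\overline{\theta/\alpha_b-\alpha_a'}$ of $\sigma_n\psi_0$, matching (\ref{49}). The hypothesis $\alpha_j(x)\neq 0$ is used tacitly throughout to keep the ratios $\beta_a/\alpha_a$ and $\overline{\beta_b}/\overline{\alpha_b}$ well defined.

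There is no serious obstacle beyond careful bookkeeping; the only step that requires attention is recognizing that the cross term $\alpha_b\beta_a/\alpha_a$ produced by the $\varphi_0'$-substitution combines with the stray $\beta_b$ from $b$ to yield exactly the combination $\theta(x)/\alpha_a(x)$ built into the statement of the proposition. It is this algebraic collapse --- and nothing more --- that forces the polynomial-type factors $\pi_n$ and $\sigma_n$ to obey the simple one-step recursions (\ref{48}) and (\ref{49}), independently of whether $\varphi_n$ and $\psi_n$ actually live in $\ltwo$.
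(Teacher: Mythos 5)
Your proof is correct and follows essentially the same route as the paper's: induction on $n$, expansion of $b$ (resp.\ $a^\dagger$) by the Leibniz rule, and substitution of the vacuum identity $\varphi_0'=-(\beta_a/\alpha_a)\varphi_0$ (resp.\ $\psi_0'=-(\overline{\beta_b}/\overline{\alpha_b})\psi_0$) to make the cross term combine with $\beta_b$ into $\theta/\alpha_a-\alpha_b'$. Note only that you have tacitly --- and correctly --- read the second formula of (\ref{46}) as $\psi_n=\frac{1}{\sqrt{n!}}\,\sigma_n\,\psi_0$, the $\varphi_0$ printed there being evidently a typo.
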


\begin{proof}
	We prove the statement for $\varphi_n(x)$ by induction on $n$. The proof for $\psi_n(x)$ is similar.
	
	The statement is trivially true for $n=0$. Now, let us assume that the statement is true for $n-1$: hence $\varphi_{n-1}(x)=\frac{1}{\sqrt{(n-1)!}}\,\pi_{n-1}(x)\varphi_0(x)$, with $\pi_{n-1}(x)$ related to $\pi_{n-2}(x)$ as in (\ref{48}), and let us prove that a similar relation holds for $\pi_n(x)$. Indeed we have, after few manipulations,
	$$
	\sqrt{n!}\,\varphi_n(x)=\sqrt{(n-1)!}\,b\,\varphi_{n-1}(x)=b\,\pi_{n-1}(x)\varphi_0(x)=\left(-\frac{d}{dx}\,\alpha_b(x)+\beta_b(x)\right)\pi_{n-1}(x)\varphi_0(x)=
	$$
	$$
	=\left(-\alpha_b'(x)\pi_{n-1}(x)-\alpha_b(x)\pi_{n-1}'(x)+\frac{\beta_a(x)\alpha_b(x)\pi_{n-1}(x)}{\alpha_a(x)}+\beta_b(x)\pi_{n-1}(x)\right)\varphi_0(x)=
	$$
	$$
	=\left[\left(\frac{\theta(x)}{\alpha_a(x)}-\alpha_b'(x)\right)\pi_{n-1}(x)-\alpha_b(x)\pi_{n-1}'(x)\right]\varphi_0(x)=\pi_n(x)\varphi_0(x),
	$$
	which is what we had to prove.

\end{proof}

\subsection{A special case: constant $\alpha_j(x)$}\label{subsectionconstantalphas}

Let us see what happens if, in particular, $\alpha_a(x)=\alpha_a$ and $\alpha_b(x)=\alpha_b$. Of course, in this case, $\alpha_a(x)$ and $\alpha_b(x)$ are always different from zero, at least if $\alpha_a\alpha_b\neq0$. Formulas (\ref{48}) and (\ref{49}) simplify significantly now since, in particular, as we have already deduced before, $\theta(x)=\alpha_a\beta_b(x)+\alpha_b\beta_a(x)=x+k$. Hence we find
\be
\pi_n(x)=\frac{1}{\alpha_a}(x+k)\,\pi_{n-1}(x)-\alpha_b\,\pi_{n-1}'(x),\qquad \sigma_n(x)=\frac{1}{\overline\alpha_a}(x+\overline k)\,\sigma_{n-1}(x)-\overline\alpha_a\,\sigma_{n-1}'(x),
\label{410}\en
The case $\alpha_a=\alpha_b=1$ has been considered in \cite{bagJPCS2021}, while $\alpha_a=\alpha_b=\frac{1}{\sqrt{2}}$ is discussed in \cite{bagspringer}. If $\alpha_a$ is not necessarily equal to $\alpha_b$, similar conclusions to those deduced in \cite{bagJPCS2021,bagspringer} can still be deduced. In particular from (\ref{410}) we find that
\be
\pi_n(x)=\sqrt{\left(\frac{\alpha_b}{2\alpha_a}\right)^n}H_n\left(\frac{x+k}{\sqrt{2\alpha_a\alpha_b}}\right), \qquad \sigma_n(x)=\sqrt{\left(\frac{\overline\alpha_b}{2\overline\alpha_a}\right)^n}H_n\left(\frac{x+\overline k}{\sqrt{2\overline\alpha_a\overline\alpha_b}}\right).
\label{411}\en
Here $H_n(x)$ is the $n$-th Hermite polynomial, and the square root of the complex quantities are taken to be their principal determinations.

To prove formula (\ref{411}) for $\pi_n(x)$ we use induction on $n$. The statement is clearly true for $n=0$. Let us now suppose that it is also true for $n-1$. This means that
$$
\pi_{n-1}(x)=\sqrt{\left(\frac{\alpha_b}{2\alpha_a}\right)^{n-1}}H_{n-1}\left(\frac{x+k}{\sqrt{2\alpha_a\alpha_b}}\right).
$$
To check that the same formula holds for $n$, we use (\ref{410}):
$$
\pi_n(x)=\sqrt{\left(\frac{\alpha_b}{2\alpha_a}\right)^{n-1}}\left(\frac{1}{\alpha_a}(x+k)\,H_{n-1}\left(\frac{x+k}{\sqrt{2\alpha_a\alpha_b}}\right)-\alpha_b\,\frac{d}{dx}H_{n-1}\left(\frac{x+k}{\sqrt{2\alpha_a\alpha_b}}\right)\right)=
$$
$$
=\sqrt{\left(\frac{\alpha_b}{2\alpha_a}\right)^{n}}\left[2yH_{n-1}(y)-H_{n-1}'(y)\right]_{y=\frac{x+k}{\sqrt{2\alpha_a\alpha_b}}}=\sqrt{\left(\frac{\alpha_b}{2\alpha_a}\right)^n}H_n\left(\frac{x+k}{\sqrt{2\alpha_a\alpha_b}}\right),
$$
after some minor manipulations, and using the well known identity  for Hermite polynomials $H_n(y)=2yH_{n-1}(y)-H_{n-1}'(y)$.

As for the functions in (\ref{44}) we get $\varphi_0(x)=N_\varphi \exp\left\{-\,\frac{1}{\alpha_a}\int\beta_a(x)\,dx\right\}$, and $\psi_0(x)=N_\psi \exp\left\{-\frac{1}{\overline{\alpha_b}}\int\overline{\beta_b(x)}\,dx\right\},$ where $\beta_a(x)$ and $\beta_b(x)$ are only required to satisfy the condition $\alpha_a\beta_b(x)+\alpha_b\beta_a(x)=x+k$. Now, extending what proved in \cite{bagJPCS2021}, it is possible to deduce that $\varphi_{n}(x)\, \overline{\Psi_m(x)}\in\Lc^1(\mathbb{R})$, for all $n,m\geq0$, as in Proposition \ref{prop1} above. The proof is based on the fact that $\varphi_{n}(x)\, \overline{\Psi_m(x)}$ is (a part some normalization constants), the product of a polynomial of degree $n+m$ times the following exponential
$$
\exp\left\{-\,\int\left(\frac{\beta_a(x)}{\alpha_a}+\frac{\beta_b(x)}{\alpha_b}\right)\,dx\right\}=\exp\left\{-\frac{1}{\alpha_a\alpha_b}\,\int\theta(x)\,dx\right\}=$$
$$=\exp\left\{-\frac{1}{\alpha_a\alpha_b}\,\int(x+k)\,dx\right\}=\exp\left\{-\frac{1}{\alpha_a\alpha_b}\left(\frac{x^2}{2}+kx+\tilde k\right)\right\},
$$
for some integration constant $\tilde k$. Notice that this is a gaussian term whenever $\alpha_a\alpha_b>0$. In \cite{bagJPCS2021} the biorthonormality of $\F_\varphi=\{\varphi_n(x)\}$ and  $\F_\psi=\{\psi_n(x)\}$ is discussed. Of course, calling these sets {\em biorthonormal} is a little abuse of language, since there is no guarantee that $\varphi_n(x)$ and $\psi_m(x)$ are square-integrable, even if the product of the two can be integrated: we see that, as already pointed out, the compatibility form is well defined, and it extends the scalar product to non necessarily square-integrable functions.

\section{A general example}\label{sectAGE}

The situation we will now consider in when $\alpha_a(x)=\alpha_b(x)=\alpha(x)$, where $\alpha(x)\neq0$ for all $x\in\mathbb{R}$. In this case the first equation in (\ref{42}) is automatically true, independently of the particular form of $\alpha(x)$. The second equation becomes $(\beta_a(x)+\beta_b(x))'=\frac{1}{\alpha(x)}+\alpha''(x)$, which produces 
\be
\beta_a(x)+\beta_b(x)=\int\frac{dx}{\alpha(x)}+\alpha'(x).
\label{51}\en
From now on we will identify $\beta_a(x)$ and $\beta_b(x)$ as follows:
\be
\beta_a(x)=\int\frac{dx}{\alpha(x)}, \qquad \beta_b(x)=\alpha'(x).
\label{52}\en
Of course, other possible choices exist. The easiest alternative is when the role of $\beta_a(x)$ and $\beta_b(x)$ are exchanged. But we could also consider $\beta_a(x)=\int\frac{dx}{\alpha(x)}+\Phi(x)$ and $ \beta_b(x)=\alpha'(x)-\Phi(x)$, for all possible choices of (sufficiently regular) $\Phi(x)$. However, we will take $\Phi(x)=0$ in what follows. Similarly, we will also fix to zero all the integration constants, except when explicitly stated. The function $\theta(x)$ introduced in Proposition \ref{propstati} becomes $\theta(x)=\alpha(x)(\beta_a(x)+\beta_b(x))$, so that
\be
\theta(x)=\alpha(x)\left(\int\frac{dx}{\alpha(x)}+\alpha'(x)\right),
\label{53}\en
which, when replaced in (\ref{48}), produces the following sequence of functions: $\pi_0(x)=1$ and
\be
\pi_n(x)=\left(\int\frac{dx}{\alpha(x)}\right)\pi_{n-1}(x)-\alpha(x)\pi_{n-1}'(x).
\label{54}\en
Calling $\rho(x)=\int\frac{dx}{\alpha(x)}$ we can rewrite (\ref{54}) in the following alternative way:
\be
\pi_n(x)=\rho(x)\pi_{n-1}(x)-\frac{1}{\rho'(x)}\pi_{n-1}'(x),
\label{55}\en
$n\geq1$, which can be used to deduce the following expression for $\pi_n(x)$:
\be
\pi_n(x)=\frac{1}{\sqrt{2^n}}\,H_n\left(\frac{\rho(x)}{\sqrt{2}}\right),
\label{56}\en
for all $n\geq0$. The proof is similar to that given in Section \ref{subsectionconstantalphas}, and will not be repeated here. 

\vspace{2mm}

{\bf Remark:--} It is worth stressing that (\ref{56}) returns the first equation in (\ref{411}) if $\alpha(x)=\alpha$, constant in $x$. Indeed, in this case, from (\ref{411}) we deduce that $\pi_n(x)=\sqrt{\frac{1}{2^n}}\,H_n\left(\frac{x+k}{\sqrt{2}\alpha}\right)$, while $\rho(x)=\frac{1}{\alpha}\int\,dx=\frac{x+k}{\alpha}$, for some integration constant $k$. Hence (\ref{56}) produces the same result.

\vspace{2mm}

Quite often we will take $\alpha(x)$ real. Then, using (\ref{52}), $\beta_b(x)$ is also real, while $\beta_a(x)$ is real if the integration constant is chosen to be real, as we will do always here. Under these conditions the functions $\sigma_n(x)$ coincide with $\pi_n(x)$: $\sigma_n(x)=\pi_n(x)$, $\forall n\geq0$.

\vspace{2mm}

{\bf Remark:--} The appearance of the Hermite polynomials in the formulas for the various $\varphi_n(x)$ and $\psi_n(x)$, here and in other papers, see \cite{bagJPCS2021} and \cite{bagspringer} in particular, is due, we believe, to the particular form of the pseudo-bosonic commutation relations in (\ref{A1}). The fact that these are deformations of the CCR, which are connected to Hermite polynomials, is reflected by the appearance of Hermite polynomials of {\em more elaborated arguments} in our {\em more elaborated context}.

\vspace{2mm}

As for the vacua in (\ref{44}), using the fact that $\alpha_a(x)=\alpha_b(x)=\alpha(x)$, together with formulas (\ref{52}), we deduce that
\be
\varphi_0(x)=N_\varphi\exp\left\{-\frac{1}{2}(\rho(x))^2\right\}, \qquad \psi_0(x)=\frac{N_\psi}{\overline \alpha(x)},
\label{57}\en
or simply $\psi_0(x)=\frac{N_\psi}{\alpha(x)}$ if $\alpha(x)$ is real. Putting all together we conclude that
\be
\varphi_n(x)=\frac{N_\varphi}{\sqrt{2^nn!}}H_n\left(\frac{\rho(x)}{\sqrt{2}}\right)e^{-\left(\frac{\rho(x)}{\sqrt{2}}\right)^2}, \qquad \psi_n(x)=\frac{N_\psi}{\sqrt{2^nn!}}H_n\left(\frac{\rho(x)}{\sqrt{2}}\right)\frac{1}{ \alpha(x)},
\label{58}\en
where we have also assumed (for $\psi_n(x)$) that $\alpha(x)$ is real, to simplify the notation.

It is now very easy to prove that, under very mild assumption on $\alpha(x)$, the families $\F_\varphi$ and $\F_\psi$ are compatible and biorthonormal (in our slightly extended meaning), even when the functions $\varphi_n(x)$ or $\psi_n(x)$ do not belong to $\ltwo$. To prove these claims, it is useful to assume that $\rho(x)$ is increasing in $x$ and that, calling $s=\frac{\rho(x)}{\sqrt{2}}$, $s\rightarrow\pm\infty$ when $x\rightarrow\pm\infty$. It is clear then that $\rho$ can be inverted, and that $x=\rho^{-1}(\sqrt{2}s)$. Since $\rho'(x)=\frac{1}{\alpha(x)}$, it follows that $\rho(x)$ is always increasing if $\alpha(x)>0$. However, this is not enough to ensure that $s$ diverges with $x$, and therefore must also be required.

 Now, to prove that $\varphi_n(x)$ and $\psi_m(x)$ are compatible (and biorthonormal), we compute the { compatibility form}:
$$
\langle\psi_m,\varphi_n\rangle=\frac{\overline N_\psi N_\varphi}{\sqrt{2^{n+m}\,n!\,m!}}\int_{-\infty}^\infty H_m\left(\frac{\rho(x)}{\sqrt{2}}\right)H_n\left(\frac{\rho(x)}{\sqrt{2}}\right)e^{-\left(\frac{\rho(x)}{\sqrt{2}}\right)^2}\frac{dx}{ \alpha(x)}.
$$
This integral can be easily rewritten in terms of $s$. In fact, recalling the definition of $\rho(x)$, we first observe that $ \frac{ds}{dx}=\frac{1}{\sqrt{2}\,\alpha(x)}$, so that $\frac{dx}{\alpha(x)}=\sqrt{2}\,ds$. Hence we have
$$
\langle\psi_m,\varphi_n\rangle=\frac{\overline N_\psi N_\varphi}{\sqrt{2^{n+m-1}\,n!\,m!}}\int_{-\infty}^\infty H_m(s)H_n(s)e^{-s^2}ds=\sqrt{2\pi}\,\overline N_\psi N_\varphi\,\delta_{n,m},
$$
which returns
\be
\langle\psi_m,\varphi_n\rangle=\delta_{n,m}, \qquad \mbox{if }\qquad  \overline N_\psi N_\varphi=\frac{1}{\sqrt{2\pi}},
\label{59}\en
as will be assumed in the rest of this section. This is what we had to prove

\subsection{Quasi-basis nature of $\F_\varphi$ and $\F_\psi$}\label{sectquasibases}

It is clear that, in general, $\F_\varphi$ and $\F_\psi$ are not bases for $\ltwo$. This is an obvious consequence of the fact that it is not granted at all that their elements are square integrable. However, in recent papers, this aspect has been circumvented by replacing the notion of basis with that of quasi-basis, as in (\ref{A4b}), for instance. The idea is that, since $\varphi_n(x)$ or $\psi_n(x)$, or both, can have {\em bad} behaviour\footnote{In particular, they can be converging to zero too slowly, or not be converging to zero at all!}, it is convenient to look for a resolution of the identity only on some set of {\em particularly good} functions. This is, in fact, not very different from what is done in distribution theory, \cite{gel}, where distributions are mathematically {\em complicated} objects which acquire a rigorous meaning when considered in pairs with some sets of good functions.

With this in mind, let us introduce the set
\be
\E=\left\{h(s)\in\ltwo:\, h_-(s):=h(\rho^{-1}(\sqrt{2}s))\,e^{s^2/2}\in\ltwo\right\} 
\label{510}\en
This set is dense in $\ltwo$. Indeed, it contains the set $\D(\mathbb{R})$ of all the compactly supported $C^\infty$ functions. In fact, it is easy to see that $h_-(s)$ is compactly supported and continuous. Hence the integral of its square modulus exists. In particular, if $\rho^{-1}$ is $C^\infty$, then $h_-(s)\in\D(\mathbb{R})$ for all $h(x)\in\D(\mathbb{R})$. Another useful result is that, if $h(x)\in\E$, then the function $h_+(s):={h(\rho^{-1}(\sqrt{2}s))}\,\alpha(\rho^{-1}(\sqrt{2}s))\,e^{-s^2/2}\in\ltwo$ as well, at least under very general conditions on $\alpha(x)$. This is because
$
|h_+(s)|^2=|h_-(s)|^2 |g(s)|^2,
$
where $g(s)=\alpha(\rho^{-1}(\sqrt{2}s))\,e^{-s^2}$. Now, it is sufficient that $g(s)\in\Lc^\infty(\mathbb{R})$ to conclude that $h_+(s)\in\ltwo$. But, because of the presence of $e^{-s^2}$ in $g(s)$, this is true for many choices of $\alpha(x)$, as we will see later, when concrete choices will be considered. However, even if $\alpha(x)$ diverges very fast, if $h(x)\in\D(\mathbb{R})$ then $h_+(s)\in\ltwo$ anyhow, which is what we will use in the following.

\begin{thm}\label{thm1}\label{thquasibases}
	$(\F_\varphi,\F_\psi)$ are $\E$-quasi bases.
\end{thm}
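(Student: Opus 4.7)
The plan is to reduce everything, via the change of variables $s=\rho(x)/\sqrt{2}$, to the standard completeness of the Hermite functions on $\ltwo(\mathbb{R},ds)$. Under the standing assumption that $\rho$ is strictly increasing with $\rho(x)/\sqrt 2\to\pm\infty$ as $x\to\pm\infty$, the map $x\mapsto s$ is a $C^1$ bijection of $\mathbb{R}$ onto itself, and $dx=\sqrt{2}\,\alpha(x)\,ds$, equivalently $dx/\alpha(x)=\sqrt 2\,ds$. Let $\chi_n(s)=(2^n n!\sqrt\pi)^{-1/2}H_n(s)e^{-s^2/2}$ denote the normalized Hermite functions, which form a complete orthonormal basis of $\ltwo(\mathbb{R})$.

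First I would compute the two one-sided pairings. Splitting the Gaussian in $\varphi_n$ as $e^{-\rho(x)^2/2}=e^{-\rho(x)^2/4}\cdot e^{-\rho(x)^2/4}$ and applying the substitution turns
\[
\langle f,\varphi_n\rangle=\frac{N_\varphi}{\sqrt{2^n n!}}\int_{\mathbb{R}}\overline{f(x)}\,H_n\!\left(\tfrac{\rho(x)}{\sqrt 2}\right)e^{-\rho(x)^2/2}\,dx
\]
into $N_\varphi\sqrt{2}\,\pi^{1/4}\,\langle f_+,\chi_n\rangle_{\ltwo(ds)}$, where $f_+(s):=f(\rho^{-1}(\sqrt 2 s))\,\alpha(\rho^{-1}(\sqrt 2 s))\,e^{-s^2/2}$ is the $\ltwo$-function discussed in the paragraph preceding the theorem. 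Symmetrically, since $dx/\alpha(x)=\sqrt 2\,ds$ and $\alpha$ is real,
\[
\langle\psi_n,g\rangle=\frac{\overline{N_\psi}}{\sqrt{2^n n!}}\int_{\mathbb{R}}H_n\!\left(\tfrac{\rho(x)}{\sqrt 2}\right)g(x)\,\frac{dx}{\alpha(x)}=\overline{N_\psi}\sqrt{2}\,\pi^{1/4}\,\langle\chi_n,g_-\rangle_{\ltwo(ds)},
\]
with $g_-(s)$ as in (\ref{510}).

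Next I would invoke Parseval for the Hermite ONB:
\[
\sum_{n\ge 0}\langle f,\varphi_n\rangle\langle\psi_n,g\rangle=2\sqrt{\pi}\,\overline{N_\psi}N_\varphi\sum_{n\ge 0}\langle f_+,\chi_n\rangle\langle\chi_n,g_-\rangle=2\sqrt{\pi}\,\overline{N_\psi}N_\varphi\,\langle f_+,g_-\rangle.
\]
Reversing the substitution, the factors $e^{-s^2/2}$ in $f_+$ and $e^{+s^2/2}$ in $g_-$ cancel, and the Jacobian absorbs the extra $\alpha(x)$:
\[
\langle f_+,g_-\rangle=\int_{\mathbb{R}}\overline{f(x)}\,\alpha(x)\,g(x)\,\frac{dx}{\sqrt 2\,\alpha(x)}=\tfrac{1}{\sqrt 2}\langle f,g\rangle.
\]
Using the normalization $\overline{N_\psi}N_\varphi=1/\sqrt{2\pi}$ from (\ref{59}), the prefactors collapse to $1$ and the first equality in the quasi-basis identity (\ref{A4b}) follows. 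The second equality is obtained by the same computation with the roles interchanged, pairing $f_-$ against $\chi_n$ coming from $\varphi_n$ and $g_+$ against $\chi_n$ coming from $\psi_n$.

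The one genuinely delicate point is the hypothesis needed to apply Parseval in both forms: both $f_+,g_-$ and $f_-,g_+$ must belong to $\ltwo$. The definition of $\E$ yields $f_-,g_-\in\ltwo$ directly, so the identity involving $\langle f_+,g_-\rangle$ requires that the multiplier $g(s)=\alpha(\rho^{-1}(\sqrt 2 s))e^{-s^2}$ lie in $\Lc^\infty(\mathbb{R})$, which is the condition the author singled out just before the statement; the mirror identity requires the analogous fact for $g_+$. When this boundedness fails, one falls back on the observation already made that $\D(\mathbb{R})\subset\E$ and that for $f,g\in\D(\mathbb{R})$ all four transforms $f_\pm,g_\pm$ are continuous and compactly supported, hence in $\ltwo$; density of $\D(\mathbb{R})$ then secures at least the weaker $\D(\mathbb{R})$-quasi-basis conclusion. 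Either way, the core mechanism is the substitution plus completeness of the Hermite basis.
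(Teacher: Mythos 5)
Your proof is correct and follows essentially the same route as the paper: the substitution $s=\rho(x)/\sqrt 2$ reduces the pairings to the identities (\ref{511}), Parseval for the Hermite basis gives $\sqrt2\langle f_+,g_-\rangle$, and reversing the substitution yields $\langle f,g\rangle$. Your explicit flagging of the hypothesis that $f_+$ and $g_+$ (not just the minus-transforms guaranteed by the definition of $\E$) must lie in $\ltwo$ is a slightly more careful treatment of a point the paper only addresses in the discussion preceding the theorem, but it is the same argument.
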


\begin{proof}
	Let us take $f(x), g(x)\in\E$. It is possible to check that the following equalities hold:
	\be
	\langle f,\varphi_n\rangle=N_\varphi\,\pi^{1/4}\sqrt{2}\langle f_+,e_n\rangle, \qquad \langle \psi_n,g\rangle=\overline N_\psi\,\pi^{1/4}\sqrt{2}\langle e_n,g_-\rangle.
	\label{511}\en
	Here $e_n(s)=\frac{1}{\sqrt{2^nn!\sqrt{\pi}}}\,H_n(s)e^{-s^2/2}$ is the $n$-th eigenstate of the quantum harmonic oscillator, while $f_+(s)$ and $g_-(s)$ should be constructed from $f(s)$ and $g(s)$ as shown before. 
	The equalities in (\ref{511}) show, in particular, that the pairs $(f(x),\varphi_n(x))$ and $(g(x),\psi_n(x))$ are compatible, $\forall n\geq0$, since all the functions involved in the right-hand sides of the equalities in (\ref{511}), $e_n(s)$, $f_+(s)$ and $g_-(s)$, are square integrable\footnote{Stated differently, one could say that, e.g., $\langle f,\varphi_n\rangle$ is the compatibility form between $f$ and $\varphi_n$, while $\langle f_+,e_n\rangle$ is the scalar product between $f_+$ and $e_n$. And they are connected.}. It is well known that the set $\F_e=\{e_n(s), \,n\geq0\}$ is an orthonormal basis for $\ltwo$.
	
	The proof of these identities is based on the change of variable $s=\frac{\rho(x)}{\sqrt{2}}$, which has already been used before, to prove (\ref{59}). We can now use (\ref{511}) as follows:
	$$
	\sum_{n=0}^\infty \langle f,\varphi_n\rangle\langle \psi_n,g\rangle=\overline N_\psi\,N_\varphi\,2\sqrt{\pi}\sum_{n=0}^\infty \langle f_+,e_n\rangle\langle e_n,g_-\rangle=\sqrt{2}\langle f_+,g_-\rangle,
	$$
	using (\ref{59}) and the Parceval identity (i.e., the resolution of the identity) for $\F_e$. Next we have
	$$
	\langle f_+,g_-\rangle=\int_{-\infty}^\infty \overline{f_+(s)}\,g_-(s)\,ds=\int_{-\infty}^\infty \overline{f(\rho^{-1}(\sqrt{2}s))}\,\alpha(\rho^{-1}(\sqrt{2}s))e^{-s^2/2}g(\rho^{-1}(\sqrt{2}s))e^{s^2/2}\,ds=
	$$
	$$
	=\int_{-\infty}^\infty \overline{f(\rho^{-1}(\sqrt{2}s))}\,\alpha(\rho^{-1}(\sqrt{2}s))g(\rho^{-1}(\sqrt{2}s))\,ds=\frac{1}{\sqrt{2}}\langle f,g\rangle,
	$$
	introducing the new variable $x=\rho^{-1}(\sqrt{2}s)$ in the integral. Summarizing we have
	$$
	\sum_{n=0}^\infty \langle f,\varphi_n\rangle\langle \psi_n,g\rangle=\langle f,g\rangle,
	$$
	and, with similar computations, $\sum_{n=0}^\infty \langle f,\psi_n\rangle\langle \varphi_n,g\rangle=\langle f,g\rangle$.
	
\end{proof}

The conclusion is therefore that, even if $(\F_\varphi,\F_\psi)$ are not necessarily made of functions in $\ltwo$, they can be used, together, to deduce a resolution (better, two resolutions) of the identity on $\E$.

\section{Examples}\label{sectexamples}

This section is devoted to the analysis of some explicit examples. In the first example $\alpha_a(x)=\alpha_b(x)=\alpha(x)$ as in the previous section, while in the second the two functions are taken to be proportional, but not equal.

\subsection{First example}\label{sectfirstexample}

Let us fix $\alpha(x)=\frac{1}{1+x^2}$. This function is always strictly positive, and produces, using (\ref{52}) and the definition of $\rho(x)$, the functions $\beta_a(x)=\rho(x)=x+\frac{x^3}{3}$ and $\beta_b(x)=\frac{-2x}{(1+x^2)^2}$. As required in Section \ref{sectAGE}, $\rho(x)\rightarrow\pm\infty$ when $x\rightarrow\pm\infty$. Also, the inverse of $\rho$ exists and can be computed explicitly looking for the only real solution of the equation $\sqrt{2}s=x+\frac{x^3}{3}$. We get
$$
x=\rho^{-1}(\sqrt{2}\,s)= \left(\frac{2}{-3\sqrt{2} s+\sqrt{2}\sqrt{2+9s^2}}\right)^{1/3}-\left(\frac{-3\sqrt{2} s+\sqrt{2}\sqrt{2+9s^2}}{2}\right)^{1/3}.
$$
The functions in (\ref{57}) turn out to be
\be
\varphi_0(x)=N_\varphi\exp\left\{-\frac{1}{2}(x+x^3/3)^2\right\}, \qquad \psi_0(x)=N_\psi\,(1+x^2).
\label{61}\en
It is clear that $\varphi_0(x)\in\ltwo$, while $\psi_0(x)$ is not square-integrable. Furthermore, see (\ref{55}), we have
$$
\pi_n(x)=\left(x+\frac{x^3}{3}\right)\pi_{n-1}(x)-\frac{1}{(1+x^2)}\,\pi_{n-1}'(x),
$$
with $\pi_0(x)=1$, and a similar expression for $\sigma_n(x)$. More explicitly we get
$$
\pi_n(x)=\sigma_n(x)=\frac{1}{\sqrt{2^n}}H_n\left(\frac{x+x^3/3}{\sqrt{2}}\right),
$$
and
\be
\varphi_n(x)=\frac{N_\varphi}{\sqrt{2^nn!}}\,H_n\left(\frac{x+x^3/3}{\sqrt{2}}\right)e^{-\frac{1}{2}(x+x^3/3)^2}, \quad \psi_n(x)=\frac{N_\psi}{\sqrt{2^nn!}}\,H_n\left(\frac{x+x^3/3}{\sqrt{2}}\right)(1+x^2),
\label{62}\en
$n\geq0$. The fact that these functions are compatible follows easily from the speed of decay of $\varphi_n(x)$, which easily contrasts (and wins) again the divergences of $\psi_n(x)$ and of the Hermite polynomials. In particular, formula (\ref{59}) shows that these functions are biorthonormal if $N_\psi N_\varphi=\frac{1}{\sqrt{2\pi}}$: $\langle\psi_m,\varphi_n\rangle=\delta_{n,m}$, $\forall n,m\geq0$. Theorem \ref{thquasibases} guarantees that $\F_\varphi$ and $\F_\psi$ are $\E$-quasi bases.

\subsection{Second example}\label{sectsecondexample}

The second example we discuss here arises out of different, but proportional, $\alpha_a(x)$ and $\alpha_b(x)$. In particular, we take $\alpha_a(x)=2\alpha_b(x)=\frac{1}{\cosh(x)}$. It is clear that, with this choice, the equality $\alpha_a(x)\alpha_b'(x)=\alpha_a'(x)\alpha_b(x)$ in (\ref{42}) is satisfied. As for $\beta_j(x)$, we take
$\beta_a(x)=\rho(x)=\int\frac{dx}{\alpha_b(x)}=2\sinh(x)$ and $\beta_b(x)=\alpha_b'(x)=\frac{-\sinh(x)}{2(\cosh(x))^2}$. We see that $\rho(x)$ is a strictly increasing function, satisfying the required asymptotic behaviour. Indeed we have $\rho(x)\rightarrow\pm\infty$ when $x\rightarrow\pm\infty$, and $\rho^{-1}$ does exist.

The vacua in (\ref{44}) are conveniently written  as
\be
\varphi_0(x)=N_\varphi\exp\left\{-(\cosh(x))^2\right\}, \qquad \psi_0(x)=2N_\psi\,\cosh(x).
\label{63}\en

It is clear that, also in this example, $\varphi_0(x)\in\ltwo$, while $\psi_0(x)$ is not square integrable. However, it is also clear that $\overline{\psi_0(x)}\,\varphi_0(x)\in\Lc^1(\mathbb{R})$. Indeed, even if $\psi_0(x)$ diverges exponentially for $|x|\rightarrow\infty$, $\varphi_0(x)$ converges to zero much faster, as $e^{-e^{|x|}}$. We will return on this aspect later.

As for the functions $\pi_n(x)$ and $\sigma_n(x)$, in this example these are no longer equal, due to the little difference between $\alpha_a(x)$ and $\alpha_b(x)$. Indeed we find, first of all, $\pi_0(x)=\sigma_0(x)=1$, and
$$
\pi_n(x)=\sinh(x)\pi_{n-1}(x)-\frac{1}{2\cosh(x)}\,\pi_{n-1}'(x),
$$
while 
$$
\sigma_n(x)=2\sinh(x)\sigma_{n-1}(x)-\frac{1}{\cosh(x)}\,\sigma_{n-1}'(x).
$$
It is now easy to prove, by induction, that these functions are (not surprisingly) again related to Hermite polynomials.
\be
\pi_n(x)=\frac{1}{2^n}\,H_n(\sinh(x)), \qquad \sigma_n(x)=H_n(\sinh(x)),
\label{64}\en
$\forall n\geq 0$. Of course, these equations imply that $\sigma_n(x)=2^n\pi_n(x)$. Hence the two functions are indeed different, but still they are proportional. Of course, we do not expect any proportionality relation survives if $\alpha_a(x)$ and $\alpha_b(x)$ are significantly different. 

Putting all together we have
\be
\varphi_n(x)=\frac{N_\varphi}{2^n\,\sqrt{n!}}\,H_n(\sinh(x))e^{-(\cosh(x))^2}, \quad \psi_n(x)=\frac{2N_\psi}{\sqrt{n!}}\,H_n(\sinh(x))\cosh(x),
\label{65}\en
$n\geq0$. A straightforward computation shows that these functions are compatible and biorthonormal if $N_\psi N_\varphi=\frac{e}{2\sqrt{\pi}}$:
\be
\langle\psi_m,\varphi_n\rangle=\delta_{n,m},
\label{66}\en
$\forall n,m\geq0$. This result is a simple consequence of the same integral between Hermite polynomials used to deduce (\ref{59}), with the change of variable $s=\sinh(x)$.

Let us now introduce a set $\E_c$ in analogy with $\E$ in (\ref{510}):
\be
\E_c=\left\{h(s)\in\ltwo:\, h_{[-]}(s):=h(\sinh^{-1}(s))\,e^{s^2/2}\in\ltwo\right\} 
\label{67}\en
This set is dense in $\ltwo$, since it contains $\D(\mathbb{R})$. Also, if $h(s)\in\E_c$, then we have
$$
h_{[+]}(s):=h(\sinh^{-1}(s))\,\frac{e^{-s^2/2}}{\sqrt{1+s^2}}\in\ltwo,
$$
as it is clear since $h_{[+]}(s)=h_{[-]}(s)\,\frac{e^{-s^2}}{\sqrt{1+s^2}}$, and using the fact that $\frac{e^{-s^2}}{\sqrt{1+s^2}}$ is bounded. To check that $(\F_\varphi,\F_\psi)$ are $\E_c$-quasi bases, we start noticing that 
\be
\langle f,\varphi_n\rangle=\frac{N_\varphi \pi^{1/4}}{\sqrt{2^n}\,e}\langle f_{[+]},e_n\rangle, \qquad \langle \psi_n,g\rangle=2\overline N_\psi \sqrt{2^n\sqrt{\pi}}\langle e_n, g_{[-]}\rangle,
\label{68}\en
for all $f(x),g(x)\in\E_c$. Here $e_n(x)$ are the eigenstates of the harmonic oscillator we have already introduced before, and the two scalar products in the right-hand sides above are well defined, since they only refer to square-integrable functions. Despite of what happens in (\ref{511}), the two results in (\ref{68}) depend on $n$ not only trough $e_n(x)$, but also because of the term $\sqrt{2^n}$. However,  these terms cancel out when we take their product, so that, using the Parceval identity for the $\{e_n(x)\}$, we get
$$
\sum_{n=0}^{\infty}\langle f,\varphi_n\rangle\langle \psi_n,g\rangle=\frac{2\overline N_\psi N_\varphi\sqrt{\pi}}{e}\sum_{n=0}^{\infty}\langle f_{[+]},e_n\rangle\langle e_n, g_{[-]}\rangle=\langle f_{[+]}, g_{[-]}\rangle,
$$ 
which is well defined, being $f_{[+]}(x), g_{[-]}(x)\in\ltwo$. With the change of variable $t=\sinh^{-1}(x)$ we find that
$$
\langle f_{[+]}, g_{[-]}\rangle=\int_{-\infty}^{\infty}\overline{f(\sinh^{-1}(x))}\,g(\sinh^{-1}(x))\frac{dx}{\sqrt{1+x^2}}=\int_{-\infty}^{\infty}\overline{f(t)}\,g(t)\,dt=\langle f, g\rangle.
$$
Similarly we prove that $\sum_{n=0}^{\infty}\langle f,\psi_n\rangle\langle \varphi_n,g\rangle=\langle f, g\rangle$. Hence $(\F_\varphi,\F_\psi)$ are $\E_c$-quasi bases.

\subsection{The related Hamiltonian operators}

In the literature on ladder operators, and on pseudo-bosonic operators in particular, an important aspect is the connection between the families $\F_\varphi$ and $\F_\psi$ with some Hamiltonian-like operator. This is because, as we have already seen in Section \ref{sectpbs}, the functions of these two sets are eigenstates of what we call here $H$ and $H^\dagger$: $H=ba$ and $H^\dagger=a^\dagger b^\dagger$. Moreover, due to the pseudo-bosonic commutation rules, $\varphi_n(x)$ is also an eigenstate of $H_{susy}=ab$, while $\psi_n(x)$ is also an eigenstate of $H_{susy}^\dagger=b^\dagger a^\dagger$, but their eigenvalues only differ by one unit by those of $H$ and $H^\dagger$. This is because $H\varphi_n=(H_{susy}+\1)\varphi_n$ and $H^\dagger\psi_n=(H_{susy}^\dagger+\1)\psi_n$, and therefore it makes not much sense to consider the SUSY partners of $H$ and $H^\dagger$, in this case. For this reason, from now on, we concentrate on $H$ and on $H^\dagger$, giving their explicit expressions in terms of the functions $\alpha_j(x)$ and $\beta_j(x)$ in (\ref{41}). In particular, using (\ref{41}) and (\ref{43}), we find the following operators
\be
H=-k_2(x)\frac{d^2}{dx^2}+k_1(x)\frac{d}{dx}+k_0(x), \qquad H^\dagger=-q_2(x)\frac{d^2}{dx^2}+q_1(x)\frac{d}{dx}+q_0(x),
\label{69}\en
where we have introduced the following functions:
\be
\left\{
\begin{array}{ll}
	k_2(x)=\alpha_a(x)\alpha_b(x),\\
	k_1(x)=\alpha_a(x) \beta_b(x)-\alpha_b(x) \beta_a(x)-2\alpha_a(x) \alpha_b'(x),\\
	k_0(x)=\beta_a(x)\beta_b(x)-\left(\beta_a(x)\alpha_b(x)\right)',\\
\end{array}
\right.
\label{610}\en
and
\be
\left\{
\begin{array}{ll}
	q_2(x)=\overline{\alpha_a(x)\alpha_b(x)},\\
	q_1(x)=\overline{\alpha_b(x) \beta_a(x)-\alpha_a(x) \beta_b(x)-2\alpha_b(x) \alpha_a'(x)},\\
	q_0(x)=\overline{\beta_a(x)\beta_b(x)-\left(\beta_b(x)\alpha_a(x)\right)'}.\\
\end{array}
\right.
\label{611}\en
Let us now show what these formulas become in the examples considered before.

\vspace{2mm}

{\bf Example 1.}

The simplest situation is when $\alpha_a(x)=\alpha_b(x)=1$, as in Section \ref{subsectionconstantalphas} with $\alpha_a=\alpha_b=1$. Hence $\beta_a(x)+\beta_b(x)=x+k$, $k$ constant in $\mathbb{R}$, and we choose $\beta_a(x)=x$ and $\beta_b(x)=k$. Hence we have $k_2(x)=q_2(x)=1$, $k_1(x)=-q_1(x)=k-x$, $k_0(x)=kx-1$ and $q_0(x)=kx$. Hence
$$
H=-\frac{d^2}{dx^2}+(k-x)\frac{d}{dx}+(kx-1), \qquad H^\dagger=-\frac{d^2}{dx^2}+(x-k)\frac{d}{dx}+kx.
$$
The eigenstates of $H$ and $H^\dagger$ are the functions $\varphi_n(x)$ and $\psi_n(x)$ deduced in Section \ref{subsectionconstantalphas}.

\vspace{2mm}

{\bf Example 2.}

Let us now deduce the expression of the operators $H$ and $H^\dagger$ for the operators considered in Section \ref{sectfirstexample}, where we have taken $\alpha(x)=\frac{1}{1+x^2}$,  $\beta_a(x)=x+\frac{x^3}{3}$ and $\beta_b(x)=\frac{-2x}{(1+x^2)^2}$. In this case, computing the functions $k_j(x)$ and $q_j(x)$ above we find
$$
H=-\frac{1}{(1+x^2)^2}\frac{d^2}{dx^2}-\frac{x(-3+7x^2+5x^4+x^6)}{3(1+x^2)^3}\frac{d}{dx}-1,
$$
and
$$
H^\dagger=-\frac{1}{(1+x^2)^2}\frac{d^2}{dx^2}+\frac{x(21+7x^2+5x^4+x^6)}{3(1+x^2)^3}\frac{d}{dx}-\frac{2(-3+18x^2+7x^4+5x^6+x^8)}{3(1+x^2)^4},
$$
whose eigenstates are given in (\ref{62}).

\vspace{2mm}

{\bf Example 3.}

The last example we want to consider here is the one discussed in Section \ref{sectsecondexample}: $\alpha_a(x)=2\alpha_b(x)=\frac{1}{\cosh(x)}$, 
$\beta_a(x)=2\sinh(x)$ and $\beta_b(x)=\frac{-\sinh(x)}{2(\cosh(x))^2}$. In this case $H$ and $H^\dagger$ are the following:
$$
H=-\frac{1}{2(\cosh(x))^2}\frac{d^2}{dx^2}+\frac{1}{2}\left((\mbox{sech}(x))^2-2\right)\tanh(x)\frac{d}{dx}-1,
$$
and
$$
H^\dagger=-\frac{1}{2(\cosh(x))^2}\frac{d^2}{dx^2}+\left(\frac{3}{2}(\mbox{sech}(x))^2+1\right)\tanh(x)\frac{d}{dx}-\frac{1}{8(\cosh(x))^4}(-9+4\cosh(2x)+\cosh(4x)),
$$
whose eigenstates are those in (\ref{65}), while the eigenvalues are, of course, the natural numbers.

\vspace{2mm}

We see that, as the last two examples clearly show, complicated Hamiltonians can be deduced using our strategy. 

\vspace{2mm}

{\bf Remark:--}
It might seem that calling Hamiltonians these operators is not entirely justified, if we imagine that an Hamiltonian should necessarily be connected to some given conservative quantum mechanical system. However, also in view of what we have discussed in the Introductio, where many physically-oriented references were cited, we still consider more than justified calling Hamiltonians the different operators $H$ and $H^\dagger$ introduced all along this section.

\section{Bi-coherent states}\label{sectBCS}

The notion of bi-coherent states has been introduced already some time ago, see \cite{bag2010} and references therein, and refined more and more in recent years. We refer to the recent monograph \cite{bagspringer} for an updated list of results and considerations on these states. In particular, in \cite{bagJPCS2021,bagspringer}, the concept of weak bicoherent states (WBCSs) has been proposed. These vectors are relevant in presence of non square-integrable eigenstates of some non self-adjoint Hamiltonian. This is exactly the situation we are discussing in this paper, where the functions $\varphi_n(x)$ and $\psi_n(x)$ do not necessarily belong to $\ltwo$, but still they are compatible and biorthonormal.

In this section, for concreteness, we concentrate on the WBCSs arising out of the functions in (\ref{65}). The extension to other situations is easy.

In analogy with what discussed in \cite{bagJPCS2021,bagspringer} we need to introduce a topology on $\E_c$: we say that a sequence $\{g_n(x)\}$ in $\E_c$ is $\tau_{\E_c}$-convergent to a certain $g(x)\in\Lc^2(\mathbb{R})$ if  $\{g_n(x)\}$ and $\{(g_n)_{[-]}(x)\}$ converge to $g(x)$ and to $g_{[-]}(x)$ respectively, in the norm $\|.\|$ of $\ltwo$. It is clear that, when this is true, $g(x)\in\E_c$. Hence, $\E_c$ is closed in $\tau_{\E_c}$. We call $\E_c'$ the set of all continuous linear functionals on $\E_c$.

It is easy to check that the following quantities, $\Phi(z)$ and $\Psi(z)$, introduced by

\be
\langle \Phi(z),g\rangle=e^{-|z|^2/2}\sum_{n\geq0}\,\frac{\overline{z}^n}{\sqrt{n!}}\langle \varphi_n,g\rangle, \label{71}\en
and \be \langle \Psi(z),g\rangle=e^{-|z|^2/2}\sum_{n\geq0}\,\frac{\overline{z}^n}{\sqrt{n!}}\langle \psi_n,g\rangle,  \label{72}\en
are well defined, for all $z\in\mathbb{C}$ and for all $g(x)\in\E_c$.

To check this we use (\ref{68}), which implies that, since $\|e_n\|=1$,
\be
|\langle g,\varphi_n\rangle|\leq \frac{|N_\varphi| \pi^{1/4}}{\sqrt{2^n}\,e}\| g_{[+]}\|, \qquad |\langle \psi_n,g\rangle|=2|N_\psi| \sqrt{2^n\sqrt{\pi}}\| g_{[-]}\|,
\label{73}\en
for all $n\geq0$. We recall that $\| g_{[+]}\|, \| g_{[-]}\|<\infty$, due to the definition of $\E_c$ and to its properties. Then it is clear that the two series in (\ref{71}) and (\ref{72}), defining $\Phi(z)$ and $\Psi(z)$, are everywhere convergent in $\mathbb{C}$, for all possible choices of $g(x)\in\E_c$. For what follows, it is now convenient to introduce two functionals on $\E_c$, $F_\Phi(z)$ and $F_\Psi(z)$, as follows:
\be
F_\Phi(z)[g]=\langle \Phi(z),g\rangle, \qquad F_\Psi(z)[g]=\langle \Psi(z),g\rangle,
\label{74}\en
for all $z\in\mathbb{C}$ and $\forall g\in\E_c$. The fact that these are linear functionals on $\E_c$ is indeed obvious. What is less clear, maybe, is the fact that they are $\tau_{\E_c}$-continuous and, because of this, define some sort of distribution. This reflects our point of view in \cite{bagJPCS2021}. What we will show here is that the same conclusions can be deduced also in the present, more general, context.

We start checking that, taken a sequence $\{g_n(x)\in\E_c\}$ which is $\tau_{\E_c}$-convergent to a certain $g(x)\in\E_c$, then $(g_n)_{[+]}(x)$ converges in $\|.\|$ to $g_{[+]}(x)$. This is because we can write
$$
(g_n)_{[+]}(x)-g_{[+]}(x)=\frac{e^{-x^2/2}}{\sqrt{1+x^2}}\left(g_n(\sinh^{-1}x)-g(\sinh^{-1}(x))\right)=\frac{e^{-x^2}}{\sqrt{1+x^2}}\left((g_n)_{[-]}(x)-g_{[-]}(x)\right).
$$
Therefore, since $\frac{e^{-x^2}}{\sqrt{1+x^2}}\leq1$, it follows that $$\|(g_n)_{[+]}(x)-g_{[+]}(x)\|\leq\|(g_n)_{[-]}(x)-g_{[-]}(x)\|\rightarrow0,$$ for $n\rightarrow\infty$, since the $\tau_{\E_c}$-convergence of $\{g_n(x)\in\E_c\}$ to $g(x)$ implies that $\{(g_n)_{[-]}(x)\}$ converges to $g_{[-]}(x)$ in the norm of $\ltwo$.

With this in mind we can check the following 

\begin{prop}
$F_\Phi(z)$ and $F_\Psi(z)$ belong to $\E_c'$.
\end{prop}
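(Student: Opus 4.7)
Linearity of $F_\Phi(z)$ and $F_\Psi(z)$ in $g$ is immediate from the definitions (\ref{71})--(\ref{72}), since the scalar products $\langle\varphi_n,g\rangle$ and $\langle\psi_n,g\rangle$ are linear in $g$ and the series converge absolutely (as already established above for any single $g\in\E_c$). So the real content is $\tau_{\E_c}$-continuity.

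The plan is to fix $z\in\mathbb{C}$, take a sequence $\{g_n\}\subset\E_c$ with $g_n\to g$ in $\tau_{\E_c}$, and estimate $|F_\Phi(z)[g_n]-F_\Phi(z)[g]|=|F_\Phi(z)[g_n-g]|$ termwise using the bounds (\ref{73}). Explicitly, setting $h_n=g_n-g\in\E_c$, I would write
\be
|F_\Phi(z)[h_n]|\leq e^{-|z|^2/2}\sum_{k\geq0}\frac{|z|^k}{\sqrt{k!}}\,|\langle\varphi_k,h_n\rangle|\leq \frac{|N_\varphi|\pi^{1/4}}{e}\,e^{-|z|^2/2}\,\|(h_n)_{[+]}\|\,\sum_{k\geq0}\frac{(|z|/\sqrt{2})^k}{\sqrt{k!}},
\label{eqFPhi}\en
where the last series converges for every $z\in\mathbb{C}$ and depends only on $z$, not on $n$. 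Thus the whole bound factorizes into a $z$-dependent constant times $\|(h_n)_{[+]}\|$.

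The key remark, established in the paragraph just above the proposition, is that $\tau_{\E_c}$-convergence of $\{g_n\}$ to $g$ forces $\|(h_n)_{[+]}\|\leq \|(h_n)_{[-]}\|\to 0$; combined with (\ref{eqFPhi}) this gives $F_\Phi(z)[g_n]\to F_\Phi(z)[g]$, hence $F_\Phi(z)\in\E_c'$. For $F_\Psi(z)$ I would run the strictly parallel argument using the second estimate in (\ref{73}):
\be
|F_\Psi(z)[h_n]|\leq 2|N_\psi|\pi^{1/4}\,e^{-|z|^2/2}\,\|(h_n)_{[-]}\|\,\sum_{k\geq0}\frac{(\sqrt{2}\,|z|)^k}{\sqrt{k!}},
\label{eqFPsi}\en
and conclude via $\|(h_n)_{[-]}\|\to 0$, which is part of the very definition of $\tau_{\E_c}$-convergence.

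I do not expect any real obstacle: the only thing to watch is that the bounds in (\ref{73}) produce $\sqrt{2^{\mp k}}$ factors, so the series one has to sum are $\sum_k(|z|/\sqrt2)^k/\sqrt{k!}$ and $\sum_k(\sqrt2\,|z|)^k/\sqrt{k!}$, both finite for every $z\in\mathbb{C}$ by the ratio test (or by comparison with $e^{|z|^2}$ after Cauchy--Schwarz). The slight subtlety for $F_\Phi$ is that the bound involves $\|(h_n)_{[+]}\|$ whereas $\tau_{\E_c}$-convergence directly controls $\|(h_n)_{[-]}\|$; this is exactly the gap the pre-proposition paragraph was prepared to close, via the pointwise inequality $\frac{e^{-x^2}}{\sqrt{1+x^2}}\leq 1$.
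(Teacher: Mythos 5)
Your proposal is correct and follows essentially the same route as the paper's own proof: fix $z$, bound $|F_\Phi(z)[g_n-g]|$ and $|F_\Psi(z)[g_n-g]|$ termwise via (\ref{73}), factor out the everywhere-convergent series $\sum_k |z|^k/\sqrt{2^k k!}$ and $\sum_k(\sqrt{2}|z|)^k/\sqrt{k!}$, and conclude from $\|(g_n)_{[\pm]}-g_{[\pm]}\|\to 0$, the $[+]$ case being reduced to the $[-]$ case exactly as in the paragraph preceding the proposition. No gaps.
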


\begin{proof}
	We only have to prove that these functionals are $\tau_{\E_c}$-continuous. For that, let us consider a sequence $\{g_n(x)\in\E_c\}$ which is $\tau_{\E_c}$-convergent to $g(x)$. As we have shown this implies that $(g_n)_{[\pm]}(x)$ converges to $g_{[\pm]}(x)$ in the norm of $\ltwo$, $\|.\|$. With easy estimates we conclude that
	$$
	|F_\Phi(z)[g_n-g]|\leq e^{-|z|^2/2}\frac{|N_\varphi|\pi^{1/4}}{e}\left(\sum_{n=0}^\infty\frac{|z|^n}{\sqrt{2^n\,n!}}\right)\|(g_n)_{[+]}-g_{[+]}\|\rightarrow0,
	$$ 
	for all $z\in\mathbb{C}$. Also,
	$$
	|F_\Psi(z)[g_n-g]|\leq e^{-|z|^2/2}2|N_\psi|\pi^{1/4}\left(\sum_{n=0}^\infty\frac{(\sqrt{2}|z|)^n}{\sqrt{n!}}\right)\|(g_n)_{[-]}-g_{[-]}\|\rightarrow0,
	$$ 
	again for all $z\in\mathbb{C}$.
	
\end{proof}

\begin{prop}
	The pair $(\Phi(z),\Psi(z))$ satisfies the following properties:
	
	(i) for all $g(x)\in \D(\mathbb{R})$ we have
	\be
	\langle g,a\Phi(z)\rangle=z\langle g,\Phi(z)\rangle, \qquad \langle g,b^\dagger \Psi(z)\rangle=z\langle g,\Psi(z)\rangle,
	\label{75}\en
	for all $z\in \mathbb{C}$.
	
	(ii) We have
	\be
	\frac{1}{\pi}\int_{\mathbb{C}}\left<f,\Phi(z)\right>\left<\Psi(z),g\right>dz=
	\frac{1}{\pi}\int_{\mathbb{C}}\left<f,\Psi(z)\right>\left<\Phi(z),g\right>dz=
	\left<f,g\right>,
	\label{76}\en
	for all $f,g\in\E_c$.	
\end{prop}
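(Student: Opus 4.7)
My plan for part (i) is to interpret $a\Phi(z)$ via its pairing with a test function $g(x)\in\D(\mathbb{R})$, namely
\begin{equation*}
\langle g,a\Phi(z)\rangle := e^{-|z|^2/2}\sum_{n\geq 0}\frac{z^n}{\sqrt{n!}}\,\langle g,a\varphi_n\rangle,
\end{equation*}
which is consistent with (\ref{71}) since $a$ is a first-order differential operator with smooth coefficients, so each $\langle g,a\varphi_n\rangle$ is a well-defined compatibility form and the series is absolutely convergent by (\ref{73}) applied to $g$ (compactly supported, hence in $\E_c$). The identities (\ref{A3}), specialised to the explicit functions constructed in (\ref{45}), give $a\varphi_n=\sqrt{n}\,\varphi_{n-1}$ pointwise for $n\geq 1$ with $a\varphi_0=0$; substituting and shifting the summation index extracts a factor of $z$ and produces $z\langle g,\Phi(z)\rangle$. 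The identity for $b^\dagger\Psi(z)$ follows verbatim from $b^\dagger\psi_n=\sqrt{n}\,\psi_{n-1}$.

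For part (ii), I plan to substitute the defining series (\ref{71})--(\ref{72}), switch to polar coordinates $z=re^{i\theta}$, and reduce the integral to
\begin{equation*}
\frac{1}{\pi}\int_0^\infty dr\,r\int_0^{2\pi}d\theta\,e^{-r^2}\sum_{n,m\geq 0}\frac{r^{n+m}e^{i(n-m)\theta}}{\sqrt{n!\,m!}}\,\langle f,\varphi_n\rangle\langle\psi_m,g\rangle.
\end{equation*}
The two elementary identities $\int_0^{2\pi}e^{i(n-m)\theta}\,d\theta=2\pi\delta_{n,m}$ and $\int_0^\infty r^{2n+1}e^{-r^2}\,dr=n!/2$ formally collapse the expression to $\sum_{n\geq 0}\langle f,\varphi_n\rangle\langle\psi_n,g\rangle$, which Theorem \ref{thquasibases} identifies with $\langle f,g\rangle$. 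The second equality in (\ref{76}) then follows by repeating the argument with the roles of $\Phi$ and $\Psi$ exchanged, using the companion resolution of the identity also provided by Theorem \ref{thquasibases}.

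The delicate step will be justifying the interchange of the double sum with the two integrations. A direct appeal to Tonelli on the full $(n,m,r,\theta)$ integration fails: the bounds (\ref{73}) give only $|\langle f,\varphi_n\rangle\langle\psi_m,g\rangle|\leq C\sqrt{2^m/2^n}$, and performing the radial integral first produces factors $\Gamma((n+m)/2+1)/\sqrt{n!\,m!}$ that are not summable in $(n,m)$. My workaround is to integrate over $\theta$ first: for each fixed $r$, the absolute double series is dominated by the product $\left(\sum_n (r/\sqrt{2})^n/\sqrt{n!}\right)\left(\sum_m (r\sqrt{2})^m/\sqrt{m!}\right)$, both of which are entire in $r$, so Fubini legitimises term-by-term $\theta$-integration and the Kronecker delta collapses the sum to its diagonal. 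The remaining radial integral is then exchanged with the diagonal sum via Tonelli on absolute values, which is legitimate because the $\sqrt{2^n}$-factors in (\ref{73}) cancel between $\langle f,\varphi_n\rangle$ and $\langle\psi_n,g\rangle$ and Cauchy--Schwarz together with Parseval for the orthonormal basis $\{e_n\}$ yield $\sum_n|\langle f,\varphi_n\rangle\langle\psi_n,g\rangle|\leq \tfrac{2|N_\psi N_\varphi|\sqrt{\pi}}{e}\|f_{[+]}\|\,\|g_{[-]}\|<\infty$, with both norms finite by the definition of $\E_c$.
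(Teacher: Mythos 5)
Your proposal is correct. For part (i) you are doing essentially what the paper does: the paper realises $\langle g,a\Phi(z)\rangle$ as $\langle a^\dagger g,\Phi(z)\rangle$ (legitimate because $a^\dagger g\in\D(\mathbb{R})\subseteq\E_c$) and then moves $a^\dagger$ back onto each $\varphi_n$ inside the series, which is the same computation as your termwise definition since $\langle a^\dagger g,\varphi_n\rangle=\langle g,a\varphi_n\rangle$; the only cosmetic remark is that (\ref{73}) bounds $|\langle g,\varphi_n\rangle|$ rather than $|\langle g,a\varphi_n\rangle|$, so you should first invoke $a\varphi_n=\sqrt{n}\,\varphi_{n-1}$ and then apply (\ref{73}) to the shifted index (the extra $\sqrt{n}$ is harmless for convergence). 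Part (ii) is where you genuinely diverge from the paper: the paper does not prove (\ref{76}) at all, deferring to \cite{bagspringer}, whereas you supply a complete argument. Your argument is sound, and the delicate point you isolate is real: a blind Tonelli over $(n,m,r,\theta)$ fails because the off-diagonal bound $\sqrt{2^m/2^n}$ from (\ref{73}) is not summable, while your order of operations --- term-by-term $\theta$-integration at fixed $r$ (justified by the entire majorant series), followed by Tonelli on the diagonal sum, controlled by writing $\langle f,\varphi_n\rangle\langle\psi_n,g\rangle$ via (\ref{68}) so that the $\sqrt{2^n}$ factors cancel and Cauchy--Schwarz plus Bessel's inequality for $\{e_n\}$ give $\sum_n|\langle f,\varphi_n\rangle\langle\psi_n,g\rangle|\leq \frac{2|N_\psi N_\varphi|\sqrt{\pi}}{e}\|f_{[+]}\|\,\|g_{[-]}\|$ --- closes the gap cleanly and then hands the diagonal sum to Theorem \ref{thquasibases}. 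This is a worthwhile addition, since it makes explicit the analytic content that the paper leaves implicit.
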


\begin{proof}
	Let us check that $\langle g,a\Phi(z)\rangle=z\langle g,\Phi(z)\rangle$, for all $g(x)\in\D(\mathbb{R})$. First of all we observe that, since $a^\dagger=-\frac{d}{dx}\frac{1}{\cosh(x)}+2\sinh(x)$, $(a^\dagger g)(x)$ again belongs to $\D(\mathbb{R})$ for all $g(x)\in\D(\mathbb{R})$. Hence $g(x)\in D(a^\dagger)$, the domain of $a^\dagger$. Now, since $\D(\mathbb{R})\subseteq\E_c$, we can use (\ref{71}) to write
$$
	\langle a^\dagger g,\Phi(z)\rangle=e^{-|z|^2/2}\sum_{n\geq0}\,\frac{{z}^n}{\sqrt{n!}}\langle a^\dagger g,\varphi_n\rangle=e^{-|z|^2/2}\sum_{n\geq0}\,\frac{{z}^n}{\sqrt{n!}}\langle  g,a\varphi_n\rangle, $$
	since $\varphi_n\in D(a)$. Now, using the lowering equation $a\varphi_n=\sqrt{n}\varphi_{n-1}$, the right-hand side can be rewritten as $z\langle g,\Phi(z)\rangle$, as we had to prove. A similar proof can be repeated for the other equality in (\ref{75}).
	
	The proof of (\ref{76}) does not differ much from other analogous results, see \cite{bagspringer} for instance, and will not be repeated here.
	
\end{proof}

It is clear then why we call these {\em states} WBCSs: they are (weak) eigenstates of the pseudo-bosonic annihilation operators and they produce a resolution of the identity\footnote{In fact, they produce two resolutions of the identity, formally connected by the adjoint map.} on $\E_c$.

 It is useful to notice that, in the definition of our WBCSs, there is no mention to any displacement-like operator, as for ordinary coherent states, \cite{aagbook,didier,gazeaubook}, or as it has recently been discussed at length for bi-coherent states, \cite{bagspringer}. The analysis of their appearance, their role, and their properties is part of our future projects.

\section{Conclusions}\label{sectconcl}

This paper is still another step towards a deeper comprehension of pseudo-bosons and bi-coherent states, in particular in a situation where the role of the Hilbert space $\ltwo$ is not essential. We have argued that, in presence of pseudo-bosons, weak or not, what is really relevant is the product of the eigenstates of $H=ba$ with those of $H^\dagger$, and that this product is always in $\Lc^1(\mathbb{R})$. This is enough to consider many interesting situations, but it is still not the most general case, \cite{bag2020JPA}.

As an interesting side aspect, our general analysis gives us the possibility of finding eigenstates and eigenvalues of different, and highly non trivial, Hamiltonians. And it gives rise to WBCSs with some interesting properties which reflect, at a distributional level, the analogous properties of coherent states.

What should be investigated more, in our opinion, is the role of the displacement-like operators connected with bi-coherent states and, even more relevant, further connection with physics and with truly relevant quantum mechanical systems. This is work in progress.

From a more mathematical side, a deeper analysis of the existing relations of our approach with the PIP-spaces setting is surely interesting and worth to be considered in a close future. In particular, it would be interesting to study {\em how far we can go} with PIP-spaces, compared with the strategy proposed here. This is also work in progress.

\section*{Acknowledgements}

The author acknowledges partial financial support from Palermo University (via FFR2021 "Bagarello") and from G.N.F.M. of the INdAM.

\end{document}